\definecolor{myurlcolor}{rgb}{0,0,0.7}
\newcommand{\cH}{\mathcal{H}}
\newcommand{\cK}{\mathcal{K}}
\newcommand{\cL}{\mathcal{L}}
\newcommand{\cM}{\mathcal{M}}
\newcommand{\cN}{\mathcal{N}}
\newcommand{\cS}{\mathcal{S}}
\newcommand{\cV}{\mathcal{V}}
\newcommand{\tr}{\text{Tr}}
\newtheorem{theorem}{Theorem}
\newtheorem{proposition}{Proposition}
\newtheorem{lemma}{Lemma}
\newtheorem{definition}{Definition}
\newtheorem{example}{Example}
\newtheorem{remark}{Remark}
\newtheorem{obs}{Observation}
\begin{document}

\title{Comparing quantum channels using Hermitian-preserving trace-preserving linear maps: A physically meaningful approach}

\author{Arindam Mitra$^{1,2,}$}
\email{am56@iitbbs.ac.in}
\email{arindammitra143@gmail.com}

\author{Jatin Ghai$^{3,4,}$}
\email{jghai@imsc.res.in}

\affiliation{$^1$Department of Physics, School of Basic Sciences, Indian Institute of Technology Bhubaneswar, Odisha 752050, India.\\
$^2$Korea Research Institute of Standards and Science, Daejeon 34113, South Korea.\\
$^3$Optics and Quantum Information Group, The Institute of Mathematical Sciences, C. I. T. Campus, Taramani, Chennai 600113, India.\\
$^4$Homi Bhabha National Institute, Training School Complex, Anushaktinagar, Mumbai 400094, India.
}

\date{\today}

\begin{abstract}
In quantum technologies, quantum channels are essential elements for the transmission of quantum states. The action of a quantum channel usually introduces noise in the quantum state and thereby reduces the information contained in it. These are mathematically described by completely positive trace-preserving linear maps that represent the generic evolution of quantum systems and are also special cases of Hermitian-preserving trace-preserving (HPTP) linear maps. Concatenating a quantum channel with another quantum channel makes it noisier and degrades its information and resource preservability. In this work, we demonstrate a physically meaningful way to compare a pair of quantum channels using Hermitian-preserving trace-preserving linear maps. More precisely, given a pair of quantum channels and an \emph{arbitrary unknown} input state, we show that if the output state of one quantum channel from the pair can be uniquely identified from the output statistics of the other channel from the pair using some quantum measurement, then the former channel from the pair can be obtained from the latter channel by concatenating it with a Hermitian-preserving trace-preserving linear map that is not necessarily positive. In such cases, the former channel may not always be obtained from the latter through post-processing. This relation between these two channels is a preorder, and we try to study its characterization in this work. Furthermore, we try to characterize the difficulty of implementing the former channel given that the latter channel has already been implemented via a quantifier, namely, \emph{physical implementability}. We also illustrate the implications of our results for the incompatibility of quantum devices through an example. In short, we try to provide valuable insights about the relevance of Hermitian-preserving trace-preserving linear maps in physically motivated settings.

\end{abstract}

\maketitle
\section{Introduction}

In quantum theory, quantum channels, i.e., completely positive and trace preserving (CPTP) linear maps, are fundamental objects that provide the mathematical framework for describing all physically realizable deterministic transformations of quantum systems \cite{wilde_2013_quantum,nielsen_2010_quantum}. They encompass how quantum systems evolve while interacting with the environment\cite{breuer_2002_open,lidar_2020_lec_notes}, making them a cornerstone of quantum theory for quantum communication, information processing, and computation. The structure of quantum channels provides us with tools for the characterization of noise in relevant experimental scenarios, thus bridging them with the abstract quantum mechanics. When a quantum channel is post-processed with another channel, the resultant channel is noisier. Thus, its ability to communicate information, both classical and quantum, is degraded. This is reflected by a decrease in its classical and quantum capacity\cite{wilde_2013_quantum}.

On the other hand, compared to quantum channels, Hermitian-preserving trace-preserving (HPTP) linear maps have been less explored from a physical perspective. A basic requirement for a transformation in quantum theory is that it should map real-valued observables to real-valued observables. Thus, for any linear map describing these kinds of transformations, Hermitian-preservability is the weakest natural demand. Among Hermitian-preserving trace-preserving linear maps, only CPTP linear maps represent the generic evolution of quantum systems. However, the physical implementability of Hermitian-preserving trace-preserving linear maps has also been a subject of investigation\cite{Jiang_2021_physical,Zhao_2023_information,Zhao_2025_sim,Parzygnat_2024_broad,Rossini_2023_NonMarkov}. Recently, an algorithm has also been developed for the implementation of these general Hermitian-preserving maps through exponentiation\cite{Wei_2024_Herm}. Besides this, positive maps are used for the construction of the witnesses\cite{Peres_SEP_cri,Chruściński_2014_wit,Mallick_2025_ent,Huber_2014_ent,Rodríguez_2008_Pos,Li_2024_ent_det} for entanglement detection.  Thus, there is quite a bit of interest in exploring the transformations beyond the complete positivity as this condition is deemed a bit restrictive, especially in quantum open system scenarios\cite{Carteret_beyond_CP,Dominy_Beyond_CP,Pastuszak_herm_pres,Rossini_2023_NonMarkov,Usha_2011_open}. From this context, the existence of quantum maps between CPTP linear maps and Hermitian-preserving trace-preserving linear maps has also been studied\cite{cao_2023_hptp}.  \emph{Motivated by these}, in this work, we try to provide \emph{a physically meaningful way} to compare two quantum channels using Hermitian-preserving trace-preserving linear maps. More specifically, given a pair of quantum channels and an \emph{arbitrary unknown} input state, we show that if the output state of one quantum channel from the pair can be uniquely identified from the output statistics of the other channel from the pair using some quantum measurement, then the former channel from the pair can be obtained from the latter channel by concatenating it with a Hermitian-preserving trace-preserving linear map that is \emph{not necessarily positive}. Through a simple example, we illustrate that in such a scenario, the former channel can't always be obtained from the latter through post-processing it with another quantum channel. Both of these comparative relations are distinct and follow a hierarchy. We then characterize the difficulty of implementing the former channel given that the latter channel has already been implemented via a quantifier, namely, \emph{physical implementability}.  We also highlight the implications of our results on the incompatibility of quantum devices.

The rest of this paper is arranged as follows. In Sec. \hyperref[Sec:Prelim]{II}, we discuss the preliminaries. 
In Sec. \hyperref[Sec:Main_res]{III}, we present the main results of this work. More specifically, in Sec. \hyperref[Sec:Main_res_asymp]{III.A}, we provide \emph{a detailed analysis} on the comparison of quantum channels using Hermitian-preserving trace-preserving linear maps in a physically meaningful way. In Sec. \hyperref[Sec:Main_res_Hierarchy]{III.B} we discuss the hierarchy between the set of channels obtained from a given channel by post-processing it with Hermitian-preserving trace-preserving linear maps and the set of channels obtained by post-processing the same given channel with other quantum channels. The set of channels obtained by post-processing the same given channel with positive maps lies between these two sets. In Sec. \hyperref[Sec:Main_res_Phys_Imp]{III.C}, given a pair of quantum channels where one can be written as the concatenation of the other with an HPTP map, we discuss the difficulty of \emph{physical implementability} of the former quantum channel in the scenarios where the latter quantum channel has already been implemented. In Sec. \hyperref[Sec:Conc]{IV}, we summarize our results and discuss future directions.

\section{Preliminaries}\label{Sec:Prelim}
\subsection{Quantum Measurements, Channels and Instruments}
A quantum measurement constitutes a set of positive semi-definite matrices $M=\{M(x)\in\cL(\cH)\}_{x\in\Omega_m}$ acting on a Hilbert space $\cH$ satisfying the condition $\sum_{x\in\Omega_M}M(x)=\mathbbm{1}_{\cH}$ where $\mathbbm{1}_{\cH}$ is the identity matrix on Hilbert space $\cH$\cite{Heinosaari_book_QF}. Here, $\Omega_M$ is the set of outcomes for the measurement $M$ and $\cL(\cH)$ is the set of all linear operators on the Hilbert space $\cH$. In this work, we assume the dimension of $\cH$ to be finite.  Each $M(x)$ is called a POVM element of the measurement $M$. If we have $M^2(x)=M(x)~\forall x\in\Omega_M$, then $M$ is said to be a projective measurement. In this work, we will only consider measurements that have a finite number of outcomes. If measurement $M$ is performed on a quantum state $\rho\in\cS(\cH)$, where $\cS(\cH)$ is the set of all density matrices on $\cH$, then the probability of obtaining the outcome $x$ is given by $\tr[\rho M(x)]$. The set of all the measurements acting on the Hilbert space $\cH$ is denoted as $\mathscr{M}(\cH)$.

A quantum channel $\Lambda:\cL(\cH)\rightarrow\cL(\cK)$ is mathematically represented by a completely positive and trace-preserving (CPTP) linear map\cite{Heinosaari_book_QF}. Physically, it transforms an arbitrary density matrix to another arbitrary density matrix. This description is known as the Schrodinger picture. For any linear map $\Phi:\cL(\cH)\rightarrow\cL(\cK)$, the dual map $\Phi^{\dagger}:\cL(\cK)\rightarrow\cL(\cH)$ is defined through the equation
\begin{equation}
    \tr[B^{\dagger}\Phi^{\dagger}(A)]=\tr[(\Phi(B))^{\dagger}A],
\end{equation}
where $A\in\cL(\cK)$ and $B\in\cL(\cH)$\cite{watrous_2018_theory}. The dual map of a quantum channel represents the action of that channel in the Heisenberg picture. We denote the set of all quantum channels with the input space $\cL(\cH)$ and the output space $\cL(\cK)$ as $\mathscr{C}(\cH,\cK)$. The composition of two arbitrary linear maps $\Lambda_1:\cL(\cH_2)\rightarrow\cL(\cH_1)$ and $\Lambda_2:\cL(\cH)\rightarrow\cL(\cH_2)$ is denoted as $\Lambda_1\circ\Lambda_2$ where for an arbitrary $\rho\in\cL(\cH)$, we have $\Lambda_1\circ\Lambda_2(\rho):=\Lambda_1(\Lambda_2(\rho))$. Clearly, $\Lambda_1\circ\Lambda_2\in\mathscr{C}(\cH,\cH_1)$. For two arbitrary quantum channels $\Lambda\in\mathscr{C}(\cH,\cK)$ and $\Lambda^{\prime}\in\mathscr{C}(\cH,\cK^{\prime})$ if $\Lambda=\Theta\circ\Lambda^{\prime}$ where $\Theta\in\mathscr{C}(\cK^{\prime},\cK)$ then we denote it as $\Lambda^{\prime}\succeq_{postproc}\Lambda$. Here, the relation $\succeq_{postproc}$ is a preorder, which is also known as \emph{post-processing preorder} \cite{Heinosaari_incomp_chan}. If $\Lambda\succeq_{postproc}\Lambda^{\prime}$ also holds, then $\Lambda$ and $\Lambda^{\prime}$ are called \emph{postprocessing equivalent}. They are represented as $\Lambda\simeq_{postproc}\Lambda^{\prime}$. The action of a quantum channel $\Lambda\in\mathscr{C}(\cH,\cK)$ on a measurement $M=\{M(x)\in\cL(\cK)\}$ in the Heisenberg picture is denoted as $\Lambda^{\dagger}(M):=\{\Lambda^{\dagger}(M(x))\}$.

%Through composition we can define \emph{post-processing preorder} denoted by $\succeq_{postproc}$. According to the above composition we have $\Lambda\succeq_{postproc}\Lambda_2$.

A linear map $\Lambda:\cL(\cH)\rightarrow\cL(\cK)$ is said to be Hermitian-preserving if for arbitrary $X\in Herm(\cH)$ we have $\Lambda(X)\in Herm(\cK)$\cite{watrous_2018_theory}. Here, $Herm(\cH)$ is a set of Hermitian matrices in $\cL(\cH)$. Let us denote the set of all \emph{Hermitian-preserving trace-preserving linear maps} with the input space $\cL(\cH)$ and the output space $\cL(\cK)$ as $\mathscr{C}_{HP}(\cH,\cK)$. Similar to the post-processing preorder, if a quantum channel $\Lambda_1\in\mathscr{C}(\cH,\cK_1)$ is related to another quantum channel $\Lambda_2\in\mathscr{C}(\cH,\cK_2)$ through the relation $\Lambda_2=\Theta\circ\Lambda_1$, where $\Theta\in\mathscr{C}_{HP}(\cK_1,\cK_2)$, it is denoted as $\Lambda_1\succeq_{HP}\Lambda_2$. It can be easily seen that the relation $\succeq_{HP}$ is also a preorder.

A linear map $\Lambda:\cL(\cH)\rightarrow\cL(\cK)$ is called positive if for any $X\in \cL_{+}(\cH)$ we have $\Lambda(X)\in\cL_{+}(\cK)$, where $\cL_{+}(\cH)$  is the set of positive-semidefinite matrices in $\cL(\cH)$\cite{watrous_2018_theory}. $\mathscr{C}_P(\cH,\overline{\cH})$ denotes the set of all \emph{positive trace-preserving linear maps} with the input space $\cL(\cH)$ and the output space $\cL(\cK)$. Similarly as above, if a quantum channel $\Lambda_1\in\mathscr{C}(\cH,\cK_1)$ is related to another quantum channel $\Lambda_2\in\mathscr{C}(\cH,\cK_2)$ through the relation $\Lambda_2=\Theta\circ\Lambda_1$, where $\Theta\in\mathscr{C}_{P}(\cK_1,\cK_2)$, it is denoted as $\Lambda_1\succeq_{P}\Lambda_2$. It can also be easily seen that the relation $\succeq_{P}$ is also a preorder.  It is worth mentioning that $\mathscr{C}(\cH,\cK)\subset\mathscr{C}_{P}(\cH,\cK)\subset\mathscr{C}_{HP}(\cH,\cK)$\cite{watrous_2018_theory,nielsen_2010_quantum}. 

For a linear map $\Lambda:\cL(\cH)\rightarrow\cL(\cK)$ there exists an equivalent representation known as Choi-Jamiolkowski isomorphism\cite{watrous_2018_theory}. Corresponding to this linear map, an operator $J_{\Lambda}\in\cL(\cK\otimes\cH)$, known as the Choi matrix of the linear map $\Lambda$, is defined as
\begin{align}
    J_{\Lambda}:=(\Lambda\otimes\mathbbm{I}_\cH)\ket{\Omega}\bra{\Omega}.
\end{align}
Here, $\mathbbm{I}_\cH\in\mathscr{C}(\cH,\cH)$ is the identity channel along with $\ket{\Omega}=\sum_{i=1}^{d_\cH}\ket{i}\otimes\ket{i}$, where $d_\cH$ is the dimension of the Hilbert space $\cH$ and $\{\ket{i}\}_{i=1}^{d_\cH}$ is an orthonormal basis in the Hilbert space $\cH$. Specific properties of a linear map are reflected in the properties of its Choi matrix. If $\Lambda$ is a trace preserving linear map then $\tr_{\cK}[J_\Lambda]=\mathbbm{1}_{\cH}$. Additionally, $\Lambda$ is Hermitian-preserving if and only if $J_\Lambda^\dagger=J_\Lambda$ and $\Lambda$ is completely positive if and only if $J_\Lambda\geq0$. For composition of two arbitrary linear maps $\Lambda_1:\cL(\cH_2)\rightarrow\cL(\cH_1)$ and $\Lambda_2:\cL(\cH)\rightarrow\cL(\cH_2)$ denoted as $\Lambda_1\circ\Lambda_2$, the Choi matrix is given as\cite{Chiribella_quant_combs} $J_{\Lambda_1\circ\Lambda_2}=\tr_{\cH_2}[(J_{\Lambda_1}\otimes\mathbbm{1}_{\cH})(\mathbbm{1}_{\cH_1}\otimes J_{\Lambda_2}^{T_{\cH_2}})]$. Here, $T_{\cH_2}$ is the partial transpose with respect to the subsystem $\cH_2$.

Both quantum measurements and quantum channels can be simultaneously generalized to the notion of quantum instruments. Mathematically, a quantum instrument $\mathbf{I}$ is defined as a set of CP trace non-increasing linear maps, i.e., $\mathbf{I}=\{\Phi_x:\cL(\cH)\rightarrow\cL(\cK)\}_{x\in\Omega_\mathbf{I}}$ such that $\Phi=\sum_{x\in\Omega_\mathbf{I}}\Phi_x$ is a quantum channel \cite{Heinosaari_book_QF}. For the instrument $\mathbf{I}$ acting on a quantum state $\rho$, the classical output of the instrument is denoted by $x$, while its quantum output is given by $\frac{\Phi_x(\rho)}{\tr[\Phi_x(\rho)]}$. Both of these occur with probability $\tr[\Phi_x(\rho)]$. The set of all such quantum instruments with the input Hilbert space $\cH$ and the output Hilbert space $\cK$ is denoted as $\mathscr{I}(\cH,\cK)$. A quantum instrument induces a unique measurement $M=\{M(x)\}$ such that $\tr[\Phi_x(\rho)]=\tr[\rho M(x)]$ for all $\rho\in\cL(\cH)$ and $x\in\Omega_{\mathbf{I}}$ or equivalently, $M(x)=\Phi_x^{\dagger}(\mathbbm{1}_{\cK}) ~\forall~x\in\Omega_\mathbf{I}$ through duality. For a given measurement, there exist many different instruments that implement it.

In the most basic sense of the words, the compatibility of a set of quantum devices means their ability to be performed simultaneously. In other words, there exists a joint device that reproduces the outcomes of all devices in that set. A set of devices are said to be compatible if such a joint device exists for them. Otherwise, the set is incompatible. In the context of quantum measurements and channels, the following three notions of compatibility have already been defined \cite{Heinosaari_incomp_review}:
\begin{itemize}
    \item \textit{Measurement compatibility:} Two arbitrary measurements $M\in\mathscr{M}(\cH)$ and $N\in\mathscr{M}(\cH)$ are said to be compatible if there exists a joint measurement $G=\{G(x,y)\in\mathscr{M}(\cH)\}$ with outcome set $\Omega_M\times\Omega_N$ such that \cite{Heinosaari_incomp_review}
\begin{align}
    M(x)&=\sum_{y\in\Omega_N}G(x,y)\qquad\forall x\in\Omega_M,\\
    N(y)&=\sum_{x\in\Omega_M}G(x,y)\qquad\forall y\in\Omega_N.
\end{align}
By performing the measurement $G$, the outcomes of both $M$ and $N$ can be obtained simultaneously. This notion is naturally generalized to an arbitrary number of measurements.
\item \textit{Channel compatibility:} Two channels $\Lambda_1:\mathscr{C}(\cH,\cK_1)$ and $\Lambda_2:\mathscr{C}(\cH,\cK_2)$ are said to be compatible if there exists a joint channel $\Lambda:\mathscr{C}(\cH,\cK_1\otimes\cK_2)$ such that \cite{Heinosaari_incomp_chan,Heinosaari_incomp_review} 
\begin{align}
    \Lambda_1&=\tr_{\cK_2}\circ\Lambda,\nonumber\\
    \Lambda_2&=\tr_{\cK_1}\circ\Lambda.
\end{align}
This definition of compatibility can also be naturally extended to an arbitrary set of channels.
\item \textit{Measurement-Channel compatibility:} A measurement $M=\{M(x)\}\in\mathscr{M}(\cH)$ and a channel $\Lambda\in\mathscr{C}(\cH,\cK)$ are said to be compatible if there exists a quantum instrument\cite{Heinosaari_book_QF} $\mathbf{I}=\{\Phi_x\}\in\mathscr{I}(\cH,\cK)$ such that for any $\rho\in\cS(\cH)$ we have $\tr[\Phi(\rho)]=\tr[M(x)\rho]~ \forall x$ and $\sum_x\Phi_x=\Lambda$. Thus, by implementing the instrument $\mathbf{I}$, the measurement $M$ and the channel $\Lambda$ can be implemented simultaneously.
\end{itemize}

It is known that for a given quantum channel $\Lambda\in\mathscr{C}(\cH,\cK)$ there exists its Stinespring dilation\cite{stinespring_dil,watrous_2018_theory,nielsen_2010_quantum} $(\cV,\overline{\cK})$, where $\cV:\cH\rightarrow\cK\otimes\overline{\cK}$ is an isometry, such that
\begin{align}
    \Lambda=\tr_{\overline{K}}[\cV\rho\cV^{\dagger}],
\end{align}
for every $\rho\in\cL(\cH)$. Through this dilation we can define the \emph{conjugate channel} $\overline{\Lambda}\in\mathscr{C}(\cH,\overline{\cK})$ as
\begin{align}
    \overline{\Lambda}=\tr_{K}[\cV\rho\cV^{\dagger}].
\end{align}
Depending on the isometry, there exist multiple conjugate channels for a given quantum channel, and they are post-processing equivalent\cite{Heinosaari_2017_INC}. We state the following two propositions from Refs. \cite{Heinosaari_2017_INC,Heinosaari_2018_INC}  without proof, which will be useful later:
\begin{proposition}
Consider two quantum channels $\Lambda_1\in\mathscr{C}(\cH,\cK_1)$ and $\Lambda_2\in\mathscr{C}(\cH,\cK_2)$. Then the following three statements are equivalent\cite{Heinosaari_2017_INC}:
\begin{enumerate}
    \item $\Lambda_1$ and $\Lambda_2$ are compatible.\label{Prop:chan_chan_incomp_point_1}
    \item $\overline{\Lambda}_2\succeq_{postproc}\Lambda_1$.\label{Prop:chan_chan_incomp_point_2}
    \item $\overline{\Lambda}_1\succeq_{postproc}\Lambda_2$.\label{Prop:chan_chan_incomp_point_3}
\end{enumerate}
\label{Prop:chan_chan_incomp}
\end{proposition}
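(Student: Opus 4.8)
The plan is to exploit the symmetry of the compatibility relation together with the Stinespring dilation. Since the definition of channel compatibility treats $\Lambda_1$ and $\Lambda_2$ on an equal footing, it suffices to prove the equivalence of statements (1) and (2); statement (3) then follows verbatim by interchanging the labels $1$ and $2$. I would therefore establish the two implications (2)$\Rightarrow$(1) and (1)$\Rightarrow$(2) separately, using in both directions the fact recalled above that any two conjugate channels of a fixed channel are post-processing equivalent.

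For (2)$\Rightarrow$(1), assume $\overline{\Lambda}_2\succeq_{postproc}\Lambda_1$, so that $\Lambda_1=\Theta\circ\overline{\Lambda}_2$ for some channel $\Theta\in\mathscr{C}(\overline{\cK}_2,\cK_1)$. Fix a Stinespring isometry $\cV_2:\cH\to\cK_2\otimes\overline{\cK}_2$ realizing this conjugate, so that $\Lambda_2(\rho)=\tr_{\overline{\cK}_2}[\cV_2\rho\cV_2^{\dagger}]$ and $\overline{\Lambda}_2(\rho)=\tr_{\cK_2}[\cV_2\rho\cV_2^{\dagger}]$. I would then define the candidate joint channel by applying $\cV_2$, then applying $\Theta$ to the $\overline{\cK}_2$ factor while leaving the $\cK_2$ factor untouched, and ordering the outputs as $\cK_1\otimes\cK_2$. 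The verification is routine: tracing out $\cK_2$ first yields $\Theta(\overline{\Lambda}_2(\rho))=\Lambda_1(\rho)$, while tracing out $\cK_1$ and using trace preservation of $\Theta$ collapses the $\Theta$ factor to a partial trace over $\overline{\cK}_2$, recovering $\Lambda_2$. Complete positivity and trace preservation of the construction are immediate since it is a composition of an isometric conjugation with the channel $\Theta$, so the pair is compatible.

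For (1)$\Rightarrow$(2), assume a joint channel $\Lambda\in\mathscr{C}(\cH,\cK_1\otimes\cK_2)$ with $\Lambda_1=\tr_{\cK_2}\circ\Lambda$ and $\Lambda_2=\tr_{\cK_1}\circ\Lambda$, and fix a Stinespring dilation $\cW:\cH\to\cK_1\otimes\cK_2\otimes\cE$ with $\Lambda(\rho)=\tr_{\cE}[\cW\rho\cW^{\dagger}]$. The key observation is that one may regroup the output factors: viewing $\cW$ as a dilation of $\Lambda_2$ with environment $\cK_1\otimes\cE$ shows that $\overline{\Lambda}'_2(\rho):=\tr_{\cK_2}[\cW\rho\cW^{\dagger}]$ is a conjugate channel of $\Lambda_2$. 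A direct computation then gives $\tr_{\cE}\circ\overline{\Lambda}'_2=\tr_{\cK_2\otimes\cE}[\cW\,\cdot\,\cW^{\dagger}]=\tr_{\cK_2}\circ\Lambda=\Lambda_1$, so that $\overline{\Lambda}'_2\succeq_{postproc}\Lambda_1$ via the partial-trace channel $\tr_{\cE}$. Finally, since $\overline{\Lambda}'_2$ and the conjugate channel $\overline{\Lambda}_2$ appearing in the statement are post-processing equivalent, we have $\overline{\Lambda}_2\succeq_{postproc}\overline{\Lambda}'_2$, and transitivity of the preorder yields $\overline{\Lambda}_2\succeq_{postproc}\Lambda_1$, which is statement (2).

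I expect the conceptual crux to be the regrouping step in (1)$\Rightarrow$(2): recognizing that a single dilation of the joint channel is simultaneously a dilation of each marginal, with the complementary system of one marginal containing the other marginal's output together with the environment. The accompanying bookkeeping subtlety is that this construction produces a conjugate channel $\overline{\Lambda}'_2$ attached to a particular dilation, which need not coincide with the $\overline{\Lambda}_2$ fixed in the statement; reconciling the two requires invoking the post-processing equivalence of conjugate channels with the correct orientation of $\succeq_{postproc}$ and then chaining the relations by transitivity. The remaining verifications—trace preservation, complete positivity of the constructed maps, and the marginal identities—are mechanical.
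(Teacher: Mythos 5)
Your proof is correct. Note, however, that the paper itself states this proposition \emph{without proof}, importing it from Ref.~\cite{Heinosaari_2017_INC}, so there is no in-paper argument to compare against; your construction is essentially the standard one from that reference. Both directions check out: for (2)$\Rightarrow$(1) the joint channel $(\mathrm{id}_{\cK_2}\otimes\Theta)(\cV_2\,\cdot\,\cV_2^{\dagger})$ has the right marginals precisely because $\Theta$ is trace-preserving, and for (1)$\Rightarrow$(2) the regrouping of the dilation $\cW:\cH\to\cK_2\otimes(\cK_1\otimes\cE)$ together with the post-processing equivalence of conjugate channels (which the paper records in its preliminaries) and transitivity of $\succeq_{postproc}$ closes the loop. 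The reduction of (3) to (2) by relabeling is legitimate since the compatibility definition is symmetric up to a swap of output factors.
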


\begin{proposition}
    Consider a channel $\Lambda\in\mathscr{C}(\cH,\cK)$ and a measurement $M\in\mathscr{M}(\cH)$. They are compatible with each other if and only if the relation $M(x)=\overline{\Lambda}^{\dagger}(M'(x))$ holds. Here, $\overline{\Lambda}^{\dagger}:\cL(\overline{\cK})\rightarrow\cL(\cH)$ is the dual of the conjugate channel $\overline{\Lambda}\in\mathscr{C}(\cH,\overline{\cK})$ and $M'=\{M'(x)\}\in\mathscr{M}(\overline{\cK})$ is a valid measurement on $\overline{\cK}$\cite{Heinosaari_2018_INC}.\label{Prop:meas_chan_incom}
\end{proposition}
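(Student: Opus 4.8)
The plan is to recognize measurement--channel compatibility as a special instance of the channel--channel compatibility already characterized in Proposition~\ref{Prop:chan_chan_incomp}, by encoding the measurement $M=\{M(x)\}$ as a quantum-to-classical channel. Fix an orthonormal basis $\{\ket{x}\}_{x\in\Omega_M}$ of a classical register $\cH_c$ and define $\Lambda_M\in\mathscr{C}(\cH,\cH_c)$ by
\begin{equation}
    \Lambda_M(\rho)=\sum_{x\in\Omega_M}\tr[M(x)\rho]\,\ket{x}\bra{x}.
\end{equation}
The first step is to prove that $M$ and $\Lambda$ are measurement--channel compatible if and only if the channels $\Lambda$ and $\Lambda_M$ are channel compatible.

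For the forward direction of this equivalence, given a compatible instrument $\mathbf{I}=\{\Phi_x\}$ with $\sum_x\Phi_x=\Lambda$, I would define the joint channel $\Theta(\rho)=\sum_x\Phi_x(\rho)\otimes\ket{x}\bra{x}$ and check directly that $\tr_{\cH_c}\circ\,\Theta=\Lambda$ and $\tr_{\cK}\circ\,\Theta=\Lambda_M$. For the converse, starting from any joint channel $\Theta$ realizing $\Lambda$ and $\Lambda_M$ as marginals, I would first dephase the classical register, replacing $\Theta$ by $(\mathrm{id}_{\cK}\otimes\mathcal{D})\circ\Theta$ with $\mathcal{D}(\cdot)=\sum_x\ket{x}\bra{x}(\cdot)\ket{x}\bra{x}$; since $\Lambda_M$ is already diagonal this leaves both marginals unchanged, while the dephased channel acquires the block form $\sum_x\Phi_x(\cdot)\otimes\ket{x}\bra{x}$, from which one reads off an instrument $\{\Phi_x\}$ with $\sum_x\Phi_x=\Lambda$ and, from the $\Lambda_M$-marginal, $\tr[\Phi_x(\rho)]=\tr[M(x)\rho]$.

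With the equivalence in hand, I would invoke Proposition~\ref{Prop:chan_chan_incomp} with $\Lambda_1=\Lambda$ and $\Lambda_2=\Lambda_M$ to conclude that compatibility is equivalent to $\overline{\Lambda}\succeq_{postproc}\Lambda_M$, i.e.\ the existence of a channel $\Theta'\in\mathscr{C}(\overline{\cK},\cH_c)$ with $\Lambda_M=\Theta'\circ\overline{\Lambda}$. Because its output lives on the classical register, $\Theta'$ is nothing but a measurement: writing $\Theta'(\sigma)=\sum_x\tr[M'(x)\sigma]\,\ket{x}\bra{x}$ for a POVM $M'=\{M'(x)\}\in\mathscr{M}(\overline{\cK})$, the identity $\Lambda_M=\Theta'\circ\overline{\Lambda}$ becomes $\tr[M(x)\rho]=\tr[M'(x)\overline{\Lambda}(\rho)]$ for all $\rho$ and all $x$. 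Pushing $M'(x)$ through the dual of the conjugate channel via $\tr[M'(x)\overline{\Lambda}(\rho)]=\tr[\overline{\Lambda}^{\dagger}(M'(x))\rho]$ and comparing for all $\rho$ yields exactly $M(x)=\overline{\Lambda}^{\dagger}(M'(x))$; running the same chain of equalities backwards from a prescribed $M'$ gives the converse direction.

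The main obstacle I anticipate is not any single computation but the careful handling of the reduction. Two points need attention: first, the dephasing argument, which is what legitimately converts an arbitrary joint channel into one of genuine instrument (block-diagonal) form without disturbing the marginals; and second, the fact that the conjugate channel $\overline{\Lambda}$, and hence the POVM $M'$, depends on the chosen Stinespring dilation. Since different conjugate channels are post-processing equivalent, the relation $M(x)=\overline{\Lambda}^{\dagger}(M'(x))$ for one dilation transfers to any other after post-processing $M'$ accordingly, so the criterion is well defined; making this invariance explicit is the most delicate part of a fully rigorous argument. I note that one can also bypass Proposition~\ref{Prop:chan_chan_incomp} entirely by dilating the instrument directly, realizing $\{\Phi_x\}$ as an environment measurement $\Phi_x(\rho)=\tr_{\overline{\cK}}[(\Id_{\cK}\otimes M'(x))\,\cV\rho\cV^{\dagger}]$ and computing $M(x)=\Phi_x^{\dagger}(\Id_{\cK})=\cV^{\dagger}(\Id_{\cK}\otimes M'(x))\cV=\overline{\Lambda}^{\dagger}(M'(x))$, which gives the same conclusion more concretely at the cost of making the dilation bookkeeping explicit.
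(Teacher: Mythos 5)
The paper does not actually prove this proposition: it is imported verbatim ``without proof'' from Ref.~\cite{Heinosaari_2018_INC}, so there is no in-paper argument to compare against. Your proof is correct and is essentially the standard argument from that literature: encode $M$ as the quantum-to-classical channel $\Lambda_M(\rho)=\sum_x\tr[M(x)\rho]\ket{x}\bra{x}$, show that measurement--channel compatibility of $(M,\Lambda)$ is equivalent to channel--channel compatibility of $(\Lambda,\Lambda_M)$ (your dephasing step is exactly what is needed to extract a genuine instrument from an arbitrary joint channel without disturbing the marginals), and then feed this into Proposition~\ref{Prop:chan_chan_incomp}. The one imprecision is in the last step: a channel $\Theta'\in\mathscr{C}(\overline{\cK},\cH_c)$ with $\Lambda_M=\Theta'\circ\overline{\Lambda}$ is \emph{not} automatically of measure-and-prepare form just because its codomain is the classical register --- it is only forced to output diagonal states on the range of $\overline{\Lambda}$. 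This is harmless: either replace $\Theta'$ by $\cD\circ\Theta'$ (which leaves $\Theta'\circ\overline{\Lambda}=\Lambda_M$ unchanged since $\Lambda_M$ is diagonal-valued), or simply set $M'(x):=\Theta'^{\dagger}(\ket{x}\bra{x})$, which is a valid POVM because $\Theta'^{\dagger}$ is positive and unital, and then your chain $\tr[M(x)\rho]=\bra{x}\Theta'(\overline{\Lambda}(\rho))\ket{x}=\tr[\overline{\Lambda}^{\dagger}(M'(x))\rho]$ goes through verbatim. Your closing remarks --- that the criterion is independent of the chosen dilation because conjugate channels are post-processing equivalent, and that the ``if'' direction can be obtained more concretely by dilating the instrument as $\Phi_x(\rho)=\tr_{\overline{\cK}}[(\Id_{\cK}\otimes M'(x))\cV\rho\cV^{\dagger}]$ --- both match what the paper itself asserts informally around Proposition~\ref{Prop:meas_chan_incom}.
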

Although while stating Proposition \ref{Prop:meas_chan_incom}, the authors of the Ref. \cite{Heinosaari_2018_INC} referred to the minimal conjugate channel, in the same work, they have also stated that the minimality condition is not necessary.
\subsection{Informationally complete measurements}
An important class of measurements is the informationally complete measurements\cite{Ariano_IC_2004}. It is a special kind of measurement whose output statistics uniquely determines the state of the quantum system. If $d$ is the dimension of the system of Hilbert space $\cH$, then at least $d^2$ POVM elements are required to form an informationally complete measurement. Any linear operator $A\in\cL(\cH)$ can be expanded in terms of informationally complete measurement $M=\{M(x)\}_{x=1}^N$ as 
\begin{equation}
    A=\sum_i^N c_x M(x)\label{Eq:IC_POVM_Exp}
\end{equation}
where $N\geq d^2$. In other words, the POVM elements $\{M(x)\}$ of an informationally complete measurement span $\cL(\cH)$. It is worth mentioning that even if the operator $A$ is positive semi-definite, some of the expansion coefficients $\{c_x\}$ in Eq. \eqref{Eq:IC_POVM_Exp} can be negative. The POVM elements of any informationally complete measurement form an overcomplete spanning set in general. Thus, the expansion in the above Eq. \eqref{Eq:IC_POVM_Exp} is not unique in general. An informationally complete measurement is said to be minimal if $N=d^2$, \textit{i.e.} there are exactly $d^2$ linearly independent POVM elements spanning the operator space $\cL(\cH)$. Thus, the expansion in Eq. \eqref{Eq:IC_POVM_Exp} is unique for the POVM elements of minimal informationally complete measurement.

\subsubsection{State reconstruction from informationally complete measurement}
Any density matrix $\rho\in\cS(\cH)$ can be expanded in terms of a minimal informationally complete measurement $M=\{M(x)\}$ as
\begin{align}
    \rho=\sum_y^{d^2}c_yM(y).\label{Eq:den_exp_MIC}
\end{align} 
Further, we can write
\begin{align}
    p(x):=&\tr[\rho M(x)]=\sum_y^{d^2}c_y \tr[M(x)M(y)]\label{Eq:prob_IC_meas}\\
    =&\sum_yR(x,y)c_y,
\end{align}
where $R(x,y):=\tr[M(x)M(y)]$. We can rewrite the above expression in form of a matrix equation as
\begin{align}
    P=R.C,
\end{align}
where $P$ is a $d^2\times1$ matrix having entries $p(x)$, $R$ is a $d^2\times d^2$ matrix having entries $R(x,y)$, and $C$ is again $d^2\times1$ matrix having entries $c_y$. Now let us assume there exist a nonzero vector represented by a $d^2\times1$ matrix $\mathbf{v}$ having entries $v_y$ corresponding to a zero eigenvalue of the matrix $R$. Then, we have
\begin{align}
    &R.\mathbf{v}=0\nonumber\\
   \Rightarrow &\sum_y R(x,y)v_y=0~\forall~x\nonumber\\
   \Rightarrow &\tr[M(x)\sum_y v_yM(y)]=0~\forall~x.\label{Eq:MIC_span}
\end{align}
As we know that the set $\{M(x)\}$ spans $\cL(\cH)$, from Eq. \eqref{Eq:MIC_span} we can conclude that
\begin{align}
        \sum_y v_yM(y)=&0.
\end{align}
Since the informationally complete measurement $M$ is minimal, the set $\{M(y)\}$ is linearly independent. Hence
\begin{align}
    \sum_y v_y M(y)=0\quad\Rightarrow\quad v_y=0,~ \forall~ y,
\end{align}
which is a contradiction. Thus, the matrix $R$ is full rank and hence, it is invertible. Therefore, we can always write
\begin{align}
    C=R^{-1}.P\label{Eq:state_recon_Ic}.
\end{align}

Thus obtaining the probability distribution given in Eq. \eqref{Eq:prob_IC_meas} through performing the minimal informationally complete measurement $M$ on \emph{an arbitrary unknown} quantum state $\rho$, one can \emph{always reconstruct} the state $\rho$ using Eq. \eqref{Eq:den_exp_MIC} and  Eq. \eqref{Eq:state_recon_Ic}. Similar logic follows for reconstructing an arbitrary unknown quantum state by performing a non-minimal informationally complete measurement. 

\section{Main results}\label{Sec:Main_res}

In this section, we present our main results.

\subsection{Comparision of quantum channels using HPTP linear maps}\label{Sec:Main_res_asymp}

\begin{definition}
    A channel $\Lambda_1:\cL(\cH)\rightarrow\cL(\cK_1)$ is said to be at least as powerful as another quantum $\Lambda_2:\cL(\cH)\rightarrow\cL(\cK_2)$ in the asymptotic sense if there exists an informationally complete measurement $M_2\in\mathscr{M}(\cK_2)$ and another measurement $M_1\in\mathscr{M}(\cK_1)$ with $\Omega_{M_2}=\Omega_{M_1}$ such that for all $\rho\in\cL(\cH)$ we have 
    \begin{equation}
        \tr[\Lambda_2(\rho)M_2(x)]=\tr[\Lambda_1(\rho)M_1(x)]~\forall x\in\Omega_{M_2} \label{Eq:def_chan_power}.
    \end{equation}
    We denote it as $\Lambda_1\succeq_{asymp}\Lambda_2$.
    \label{Def:chan_power_IC}
\end{definition}

If both $\Lambda_1\succeq_{asymp}\Lambda_2$ and $\Lambda_2\succeq_{asymp}\Lambda_1$ hold, we denote it as $\Lambda_2\simeq_{asymp}\Lambda_1$. As quantum channels are hermitian-preserving trace-preserving linear maps by definition, the relation $\Lambda_1\succeq_{postproc}\Lambda_2\Rightarrow\Lambda_1\succeq_{asymp}\Lambda_2$. In the Heisenberg picture, Eq. \eqref{Eq:def_chan_power} can be written as

\begin{equation}
    \Lambda_2^{\dagger}(M_2(x))=\Lambda_1^{\dagger}(M_1(x))~\forall x\in\Omega_{M_2}.\label{Eq:def_chan_power_HP}
\end{equation}

\begin{remark}[Justification of Definition \ref{Def:chan_power_IC}]
    \rm{Note that for \emph{a given unknown quantum state} $\rho\in\cS(\cH)$, at first, $\Lambda_1$ is implemented on it and then the probability distribution is obtained through performing the measurement $M_1$ on it. After that one can reconstruct the state $\Lambda_2(\rho)$ using Eq. \eqref{Eq:state_recon_Ic} and Eq. \eqref{Eq:def_chan_power}. In other words, for an arbitrary unknown state $\rho$, the quantum state $\Lambda_2(\rho)$ is always recoverable from a large number of copies of the quantum state $\Lambda_1(\rho)$ using Eq. \eqref{Eq:state_recon_Ic} if Eq. \eqref{Eq:def_chan_power} is satisfied. In that sense, $\Lambda_1$ is at least as powerful as $\Lambda_2$ in the asymptotic sense. Hence, Definition \ref{Def:chan_power_IC} is justified.}\label{Rem:just_asymp}
\end{remark}

Note that in Definition \ref{Def:chan_power_IC}, the measurement $M_2$ may not be minimal informationally complete. Therefore, we state and prove the following proposition.

\begin{proposition}
        A channel $\Lambda_1:\cL(\cH)\rightarrow\cL(\cK_1)$ is at least as powerful as another quantum $\Lambda_2:\cL(\cH)\rightarrow\cL(\cK_2)$ in the asymptotic sense (i.e., $\Lambda_1\succeq_{asymp}\Lambda_2$) if and only if there exists a minimal informationally complete measurement $\overline{M}_2\in\mathscr{M}(\cK_2)$ and another measurement $\overline{M}_1\in\mathscr{M}(\cK_1)$ with $\Omega_{\overline{M}_2}=\Omega_{\overline{M}_1}$ such that for all $\rho\in\cL(\cH)$ we have 
    
    \begin{equation}
        \tr[\Lambda_2(\rho)\overline{M}_2(x)]=\tr[\Lambda_1(\rho)\overline{M}_1(x)]~\forall x\in\Omega_{\overline{M}_2} \label{Eq:prop_min_IC_chan_power}.
    \end{equation}
    \label{Prop:chan_power_min_IC}
\end{proposition}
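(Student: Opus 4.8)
The plan is to prove both implications, with the reverse direction immediate and the forward direction carrying the real content. For the ($\Leftarrow$) direction, I would simply observe that a minimal informationally complete measurement is in particular an informationally complete measurement; hence if $\overline{M}_2$ and $\overline{M}_1$ satisfy Eq.~\eqref{Eq:prop_min_IC_chan_power}, then taking $M_2=\overline{M}_2$ and $M_1=\overline{M}_1$ in Definition~\ref{Def:chan_power_IC} shows at once that $\Lambda_1\succeq_{asymp}\Lambda_2$.

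For the ($\Rightarrow$) direction, suppose $\Lambda_1\succeq_{asymp}\Lambda_2$, so there is an informationally complete but possibly non-minimal measurement $M_2=\{M_2(x)\}_{x\in\Omega_{M_2}}$ on $\cK_2$ and a measurement $M_1=\{M_1(x)\}$ on $\cK_1$ with $\Omega_{M_1}=\Omega_{M_2}$ obeying Eq.~\eqref{Eq:def_chan_power}. Writing $d_2=\dim\cK_2$, the space $\cL(\cK_2)$ has dimension $d_2^2$, and the elements $\{M_2(x)\}$ span it. The idea is to coarse-grain the outcomes of $M_2$ into exactly $d_2^2$ blocks whose summed POVM elements remain linearly independent, producing a minimal informationally complete measurement $\overline{M}_2$. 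Since $\Omega_{M_1}=\Omega_{M_2}$, the very same partition applied to $M_1$ yields a POVM $\overline{M}_1$, and summing Eq.~\eqref{Eq:def_chan_power} over each block gives Eq.~\eqref{Eq:prop_min_IC_chan_power}.

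The core step is thus the claim that any POVM $\{M_2(x)\}$ whose elements span $\cL(\cK_2)$ can be coarse-grained into a minimal informationally complete POVM, which I would prove greedily. First select $d_2^2$ linearly independent elements $M_2(x_1),\dots,M_2(x_{d_2^2})$, seeding blocks with sums $S_1=M_2(x_1),\dots,S_{d_2^2}=M_2(x_{d_2^2})$ that form a basis of $\cL(\cK_2)$. Then process the remaining elements one at a time: to assign a leftover element $M_2(y)$, expand it in the current basis as $M_2(y)=\sum_j a_j S_j$; adding $M_2(y)$ to block $i$ replaces $S_i$ by $S_i+M_2(y)$, and a matrix-determinant-lemma computation shows the new collection stays linearly independent iff $a_i\neq -1$. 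Such an index always exists, for otherwise $a_j=-1$ for all $j$ would force $M_2(y)=-\sum_j S_j$; since $M_2(y)$ and each $S_j$ are positive semidefinite with the $S_j$ nonzero, this makes $M_2(y)+\sum_j S_j$ a vanishing sum of positive semidefinite operators, forcing every $S_j=0$ and contradicting their linear independence. Iterating keeps the block sums linearly independent at every stage; at the end the $d_2^2$ sums are linearly independent, positive semidefinite, and add to $\sum_x M_2(x)=\Id_{\cK_2}$, so $\overline{M}_2=\{S_j\}_{j=1}^{d_2^2}$ is a valid minimal informationally complete measurement.

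I expect the main obstacle to be precisely this coarse-graining lemma, namely guaranteeing that the greedy merging never gets stuck. The \emph{positivity} of the POVM elements is essential here: it is what rules out the degenerate case where all expansion coefficients equal $-1$ and thereby keeps a valid merge always available. Once the lemma is established, the remainder is bookkeeping: for the induced partition $\{X_j\}$ one sets $\overline{M}_1(j)=\sum_{x\in X_j}M_1(x)$, checks it is a POVM as a coarse-graining of $M_1$, and sums the hypothesis Eq.~\eqref{Eq:def_chan_power} over each block $X_j$ to obtain Eq.~\eqref{Eq:prop_min_IC_chan_power} with $\Omega_{\overline{M}_2}=\Omega_{\overline{M}_1}$ indexed by the blocks.
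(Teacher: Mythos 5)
Your proposal is correct and follows essentially the same route as the paper: both proofs coarse-grain the surplus outcomes of the non-minimal informationally complete POVM onto a linearly independent core of $d_2^2$ elements, use positivity of the POVM elements to guarantee that the merge preserves linear independence, and then apply the identical grouping to $M_1$ and sum Eq.~\eqref{Eq:def_chan_power} over blocks. The only difference is executional: the paper merges all leftover elements in one shot onto a single element chosen so that its expansion coefficient is strictly positive, whereas you distribute them greedily one at a time using the matrix-determinant-lemma criterion $a_i\neq -1$; both hinge on the same observation that positivity rules out the degenerate coefficient.
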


\begin{proof}
    If the channel $\Lambda_1$ is at least as powerful as the channel $\Lambda_2$ in the asymptotic sense then from Definition \ref{Def:chan_power_IC}, we have an informationally complete measurement $M_2\in\mathscr{M}(\cK_2)$ and another measurement $M_1\in\mathscr{M}(\cK_1)$ such that Eq. \eqref{Eq:def_chan_power} holds. As $M_2$ is informationally complete the $|\Omega_{M_2}|\geq d^2$ where $d=\dim(\cK_2)$ and $|\Omega_{M_2}|$ is the cardinality of the set $\Omega_{M_2}$. If $M_2$ is minimal informationally complete then the Proposition trivially holds. If $M_2$ is not minimal informationally complete then $|\Omega_{M_2}|> d^2$. Let $|\Omega_{M_2}|=d^2+n$ where $n>0$ is a natural number. Among the POVM elements of $M_2$, $d^2$ number of elements are linearly independent and are able to span $\cL(\cK_2)$. Without loss of generality, we can assume that these linearly independent POVM elements are $M_2(1),\ldots M_2(d^2)$. Then the matrix $X:=\sum^{d^2+n}_{i=d^2+1}M_2(i)$ can be written as $X=\sum^{d^2}_{i=1}c_iM_2(i)$. Note that $X\geq 0$ and therefore, although some of the $c_i$s can be negative, at least one of those $c_i$s is strictly positive. Let $c_j$ be one of such strictly positive $c_i$s i.e., $c_j>0$. In case of more than one positive $c_i$s, $c_j$ can be arbitrarily chosen among those positive $c_i$s. Let $\overline{M}_2=\{\overline{M}_2(i)\}$ be a set of $d^2$ matrices such that $\overline{M}_2(i)=M_2(i)~\forall i\in\{1,\ldots,d^2\}\setminus j$ and $\overline{M}(j)=M_2(j)+X$. Clearly, $\overline{M}_2(i)\geq 0~\forall i\in\{1,\ldots,d^2\}$ and $\sum^{d^2}_{i=1}M_2(i)=\sum_iM_2(i)=\mathbbm{1}_{\cK_2}$. Hence, $\overline{M}_2\in\mathscr{M}(\cK_2)$ is a valid quantum measurement with $|\Omega_{\overline{M}_2}|=d^2$. Now, note that the set $\{M_2(1),\ldots M_2(d^2)\}$ is linearly independent. Consider $\{\alpha_1,\ldots,\alpha_{d^2}\}$ be an arbitrary set of complex numbers i.e., $\alpha_i\in\mathbbm{C}~\forall i$. Then
    \begin{align}
        &\sum_i\alpha_i\overline{M}_2(i)=0,\nonumber\\
        \Rightarrow&\sum_{i,i\neq j}\alpha_{i}\overline{M}_2(i)+\alpha_j\overline{M}_2(j)=0,\nonumber
        \end{align}
        \begin{align}
         \Rightarrow&\sum_{i,i\neq j}\alpha_{i}M_2(i)+\alpha_j[M_2(j)+X]=0,\nonumber\\
        \Rightarrow&\sum_{i,i\neq j}\alpha_{i}M_2(i)+\alpha_j[M_2(j)+\sum^{d^2}_{i=1}c_iM_2(i)]=0,\nonumber\\
        \Rightarrow&\sum_{i}(\alpha_{i}+\alpha_jc_{i})M_2(i)=0,\nonumber\\
        \Rightarrow&(\alpha_{i}+\alpha_jc_{i})=0~\forall i,
    \end{align}
    Let us define $\beta_{i}:=\alpha_{i}+\alpha_jc_{i}$. Note that $c_j>0$ and therefore, $\beta_{j}=\alpha_j(1+c_j)=0\Rightarrow \alpha_j=0$.  As $\alpha_j=0$, $\beta_{i}=0\Rightarrow \alpha_i=0$ for all $i\in\{1,\ldots,d^2\}\setminus j$. Hence, $\alpha_i=0\forall i$. Therefore, the set $\{\overline{M}_2(1),\ldots \overline{M}_2(d^2)\}$ is linearly independent. Clearly, for an arbitrary matrix $Y\in\cL(\cK_2)$ and $Y=\sum^{d^2}_{i=1}c^{\prime}M_2(i)$, one can write $Y=\sum^{d^2}_{i=1}c^{\prime\prime}\overline{M}_2(i)$ where $c^{\prime\prime}_i=c^{\prime}_i-\frac{c_ic^{\prime}_j}{1+c_j}~\forall i\in\{1,\ldots,d^2\}\setminus j$ and $c^{\prime\prime}_j=\frac{c^{\prime}_j}{1+c_j}$, i.e., the set $\{\overline{M}_2(1),\ldots,\overline{M}_2(d^2)\}$ is able to span $\cL(\cK_2)$. Hence, $\overline{M}_2$ is a minimal informationally complete measurement. Consider another set of matrices $\overline{M}_1=\{\overline{M}_1(1),\ldots,\overline{M}_1(d^2)\}$ such that $\overline{M}_1(i)=M_1(i)~\forall i\in\{1,\ldots,d^2\}\setminus j$ and $\overline{M}_1(j)=M_1(j)+\sum^{d^2+n}_{d^2+1}M_1(j)$. Clearly, $\overline{M}_1(i)\geq 0~\forall i$ and $\sum^{d^2}_{i=1}\overline{M}_1(i)=\mathbbm{1}_{\cK_1}$. Therefore, $\overline{M_1}=\{\overline{M}_1(1),\ldots,\overline{M}_1(d^2)\}\in\mathscr{M}(\cK_1)$ is a valid measurement. Then from linearity of $\Lambda_1$ and $\Lambda_2$ alongwith Eq. \ref{Eq:def_chan_power}, for all $\rho\in\cL(\cH)$ we have 
    \begin{equation}
    \tr[\Lambda_2(\rho)\overline{M}_2(i)]=\tr[\Lambda_1(\rho)\overline{M}_1(i)]~\forall i\in\Omega_{\overline{M}_2}.
    \end{equation} 

    Conversely, if $\overline{M}_2$ is minimal informationally complete in Eq. \ref{Eq:prop_min_IC_chan_power} then by definition it is informationally complete and therefore, from Definition \ref{Def:chan_power_IC}, we have $\Lambda_1\succeq_{asymp}\Lambda_2$.
    Hence, the proposition is proved. 
\end{proof}

Next, we provide an equivalent characterization of the relation $\succeq_{asymp}$ using Hermitian-preserving trace-preserving linear maps in the following theorem.
\begin{theorem}
    A channel $\Lambda_1:\cL(\cH)\rightarrow\cL(\cK_1)$ is at least as powerful as $\Lambda_2:\cL(\cH)\rightarrow\cL(\cK_2)$ in the asymptotic sense (i.e., $\Lambda_1\succeq_{asymp}\Lambda_2$) if and only if their exists a Hermitian-preserving trace-preserving linear map $\Theta:\cL(\cK_1)\rightarrow\cL(\cK_2)$ such that $\Lambda_2=\Theta\circ\Lambda_1$.\label{Th:herm_ordering}
\end{theorem}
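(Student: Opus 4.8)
The plan is to prove the two implications of the equivalence separately, leaning on the minimal informationally complete (IC) reformulation supplied by Proposition \ref{Prop:chan_power_min_IC} and on the state-reconstruction machinery culminating in Eq. \eqref{Eq:state_recon_Ic}. For the forward direction, I would start from $\Lambda_1\succeq_{asymp}\Lambda_2$ and apply Proposition \ref{Prop:chan_power_min_IC} to obtain a \emph{minimal} IC measurement $\overline{M}_2\in\mathscr{M}(\cK_2)$ and a measurement $\overline{M}_1\in\mathscr{M}(\cK_1)$ with the same outcome set satisfying Eq. \eqref{Eq:prop_min_IC_chan_power}. Since $\overline{M}_2$ is minimal IC, the matrix $R(x,y)=\tr[\overline{M}_2(x)\overline{M}_2(y)]$ is real, symmetric and (as shown before Eq. \eqref{Eq:state_recon_Ic}) invertible, so I can form the dual frame $D_2(x):=\sum_y R^{-1}(y,x)\,\overline{M}_2(y)$, which satisfies $\tr[\overline{M}_2(x')D_2(x)]=\delta_{x'x}$ and reconstructs every operator. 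I then simply \emph{define}
\begin{equation}
    \Theta(\sigma):=\sum_x \tr[\sigma\,\overline{M}_1(x)]\,D_2(x),\qquad \sigma\in\cL(\cK_1).\nonumber
\end{equation}
Evaluating at $\sigma=\Lambda_1(\rho)$ and invoking Eq. \eqref{Eq:prop_min_IC_chan_power} converts the coefficients $\tr[\Lambda_1(\rho)\overline{M}_1(x)]$ into $\tr[\Lambda_2(\rho)\overline{M}_2(x)]$, so the reconstruction formula gives $\Theta(\Lambda_1(\rho))=\Lambda_2(\rho)$, i.e. $\Lambda_2=\Theta\circ\Lambda_1$. Hermitian preservation is immediate: for Hermitian $\sigma$ the coefficients $\tr[\sigma\,\overline{M}_1(x)]$ are real (the $\overline{M}_1(x)$ are POVM elements, hence Hermitian), while each $D_2(x)$ is a real combination of the Hermitian $\overline{M}_2(y)$.

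The one genuinely computational point in this direction is trace preservation, which I would settle through the identity $\sum_y R(x,y)=\tr[\overline{M}_2(x)]$, following from $\sum_y\overline{M}_2(y)=\mathbbm{1}$. Reading this as $R\mathbf{1}=\mathbf{t}$ with $t_x=\tr[\overline{M}_2(x)]$ yields $\sum_y R^{-1}(x,y)\,t_y=1$, and since $R^{-1}$ is symmetric this is exactly $\tr[D_2(x)]=1$ for every $x$. Hence $\tr[\Theta(\sigma)]=\sum_x\tr[\sigma\,\overline{M}_1(x)]\,\tr[D_2(x)]=\sum_x\tr[\sigma\,\overline{M}_1(x)]=\tr[\sigma]$, using $\sum_x\overline{M}_1(x)=\mathbbm{1}$, so $\Theta\in\mathscr{C}_{HP}(\cK_1,\cK_2)$.

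For the converse, suppose $\Lambda_2=\Theta\circ\Lambda_1$ with $\Theta$ Hermitian-preserving and trace-preserving. I would fix any minimal IC measurement $M_2\in\mathscr{M}(\cK_2)$ and take the natural candidate $M_1(x):=\Theta^{\dagger}(M_2(x))$, because then $\tr[\Lambda_2(\rho)M_2(x)]=\tr[\Lambda_1(\rho)\Theta^{\dagger}(M_2(x))]$ reproduces Eq. \eqref{Eq:def_chan_power}, while $\sum_x\Theta^{\dagger}(M_2(x))=\Theta^{\dagger}(\mathbbm{1})=\mathbbm{1}$ since trace preservation of $\Theta$ is equivalent to unitality of $\Theta^{\dagger}$. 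The operators $\Theta^{\dagger}(M_2(x))$ are Hermitian as well, because the dual of a Hermitian-preserving map is again Hermitian-preserving.

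The main obstacle is exactly here: as $\Theta$ need not be positive, $\Theta^{\dagger}(M_2(x))$ may fail to be positive semidefinite, so $\{M_1(x)\}$ need not be a legitimate POVM. I would resolve this by regularizing $M_2$, replacing it with $\tilde{M}_2(x):=\lambda M_2(x)+(1-\lambda)r(x)\mathbbm{1}$ for some mixing weight $\lambda\in(0,1)$ and probability vector $r$. This mixture with a trivial measurement is a valid POVM, and a short determinant computation (the change-of-basis matrix is $\lambda I+(1-\lambda)\mathbf{r}\mathbf{1}^{T}$, with determinant $\lambda^{d^2-1}\neq0$) shows it remains minimal IC. On the other side, $\Theta^{\dagger}(\tilde{M}_2(x))=\lambda\,\Theta^{\dagger}(M_2(x))+(1-\lambda)r(x)\mathbbm{1}$ has smallest eigenvalue $\lambda\,\lambda_{\min}(\Theta^{\dagger}(M_2(x)))+(1-\lambda)r(x)$; choosing $\lambda$ small and loading $r(x)$ to dominate the finitely many negative eigenvalues forces each $\Theta^{\dagger}(\tilde{M}_2(x))\ge0$. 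Concretely, taking $\lambda\le 1/(S+1)$ with $S=\sum_x|\lambda_{\min}(\Theta^{\dagger}(M_2(x)))|$ keeps the required budget $\sum_x r(x)\le 1$ feasible. The regularized pair $(\tilde{M}_2,\{\Theta^{\dagger}(\tilde{M}_2(x))\})$ is then a minimal IC measurement together with a genuine POVM satisfying Eq. \eqref{Eq:def_chan_power}, which gives $\Lambda_1\succeq_{asymp}\Lambda_2$ and completes the equivalence.
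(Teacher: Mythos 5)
Your proof is correct and follows essentially the same route as the paper: the forward direction builds the same map $\Theta$ (your dual-frame formula $\Theta(\sigma)=\sum_x\tr[\sigma\overline{M}_1(x)]D_2(x)$ is exactly the Schr\"odinger-picture form of the paper's definition $\Theta^{\dagger}(\overline{M}_2(x)):=\overline{M}_1(x)$, since $\tr[\overline{M}_2(x')D_2(x)]=\delta_{x'x}$), and the converse uses the same shift-toward-identity regularization, with the paper's uniform shift $M_1(x)\mapsto(M^{\prime}_1(x)+r\mathbbm{1})/(1+d^2r)$ being the special case $r(x)=1/d^2$ of your weighted mixture. The only genuinely added content is your explicit verification $\tr[D_2(x)]=1$ via $R\mathbf{1}=\mathbf{t}$, which the paper sidesteps by checking unitality of $\Theta^{\dagger}$ directly.
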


\begin{proof}
    From Proposition \ref{Prop:chan_power_min_IC}, we have a minimal informationally complete measurement $\overline{M}_2$ and an arbitrary measurement $\overline{M}_1$ such that Eq. \eqref{Eq:prop_min_IC_chan_power} holds. In Heisenberg picture we have (similar to Eq. \eqref{Eq:def_chan_power_HP}), we have

\begin{equation}
    \Lambda_2^{\dagger}(\overline{M}_2(x))=\Lambda_1^{\dagger}(\overline{M}_1(x))~\forall x\in\Omega_{\overline{M}_2}.
\end{equation} 

Note that the set $\{\overline{M}_2(x)\}^{d^2}_{x=1}$ forms a basis of $\cL(\cK_2)$ and consider a linear map $\Theta:\cL(\cK_1)\rightarrow\cL(\cK_2)$ such that 

\begin{align}
    \Theta^{\dagger}[\overline{M}_2(x)]:=\overline{M}_1(x)~\forall x\in\Omega_{\overline{M}_2}.\label{Eq:theta_dagger_define}
\end{align}

Therefore, for an arbitrary matrix $X\in\cL(\cK_2)$ such that $X:=\sum^{d^2}_{x=1}c_x\overline{M}_2(x)$,
\begin{equation}
    \Theta^{\dagger}(X)=\Theta^{\dagger}(\sum^{d^2}_{x=1}c_x\overline{M}_2(x))=\sum^{d^2}_{x=1}c_x\Theta^{\dagger}(\overline{M}_2(x))=\sum^{d^2}_{x=1}c_x\overline{M}_1(x).
\end{equation}
Therefore, 
\begin{align}
    \Lambda_2^{\dagger}(X)&=\Lambda_2^{\dagger}(\sum^{d^2}_{x=1}c_x\overline{M}_2(x))=\sum^{d^2}_{x=1}c_x\Lambda_2^{\dagger}(\overline{M}_2(x))\nonumber\\
    &=\sum^{d^2}_{x=1}c_x\Lambda_1^{\dagger}(\overline{M}_1(x))=\Lambda_1^{\dagger}(\sum^{d^2}_{x=1}c_x\overline{M}_1(x))\nonumber\\
    &=\Lambda_1^{\dagger}(\Theta^{\dagger}(X)).
\end{align}
Hence, $\Lambda^{\dagger}_2=\Lambda^{\dagger}_1\circ\Theta^{\dagger}$ or equivalently $\Lambda_2=\Theta\circ\Lambda_1$. Now, we have to show that $\Theta^{\dagger}$ is Hermitian-preserving unital map. Note that 

\begin{align}
    \Theta^{\dagger}[X^{\dagger}]&=\Theta^{\dagger}[\sum^{d^2}_{x=1}c^{*}_x\overline{M}^{\dagger}_2(x)]=\sum^{d^2}_{x=1}c^{*}_x\Theta^{\dagger}(\overline{M}^{\dagger}_2(x))\nonumber\\
    &=\sum^{d^2}_{x=1}c^{*}_x\Theta^{\dagger}(\overline{M}_2(x))
    =\sum^{d^2}_{x=1}c^{*}_x\overline{M}_1(x)\nonumber\\
    &=\sum^{d^2}_{x=1}c^{*}_x\overline{M}^{\dagger}_1(x)=[\sum^{d^2}_{x=1}c_x\overline{M}_1(x)]^{\dagger}=[\Theta^{\dagger}(X)]^{\dagger}.
\end{align}

Furthermore, we have

\begin{align}
    \Theta^{\dagger}(\mathbbm{1}_{\cK_1})=\Theta^{\dagger}(\sum^{d^2}_{x=1}\overline{M}_2(x))=\sum^{d^2}_{x=1}\overline{M}_1(x)=\mathbbm{1}_{\cK_2}.
\end{align}

Hence, $\Theta^{\dagger}$ is the Hermitian-preserving unital linear map and therefore, $\Theta$ is the Hermitian-preserving trace-preserving linear map.

Conversely, suppose $\Lambda_2=\Theta\circ\Lambda_1$ where $\Theta$ is a Hermitian-preserving trace-preserving linear map. Then $\Lambda_2^{\dagger}=\Lambda_1^{\dagger}\circ\Theta^{\dagger}$. Therefore, for every $X\in\cL(\cK_2)$, there exists a $Y=\Theta^{\dagger}(X)\in\cL(\cK_1)$ such that $\Lambda_2^{\dagger}(X)=\Lambda_1^{\dagger}(Y)$ holds. Now, consider an arbitrary minimal informationally complete measurement $M^{\prime}_2=\{M^{\prime}_2(x)\}_{x=1}^{d^2}\in\mathscr{M}(\cK_2)$  where $d=\dim(\cK_2)$. Then there exists a set of matrices $M^{\prime}_1=\{M^{\prime}_1(x)=\Theta^{\dagger}(M^{\prime}_2(x))\}\in\mathscr{M}(\cK_1)$ such that 
    \begin{equation}
    \Lambda_2^{\dagger}(M^{\prime}_2(x))=\Lambda_1^{\dagger}(M^{\prime}_1(x))~\forall x ,\label{Eq:asymp_hp_only_if}
    \end{equation}
    holds. We can see that $M^{\prime}_1(x)=(M^{\prime}_1(x))^{\dagger}~\forall x$ as $\Theta^{\dagger}$ is a Hermitian-preserving and $M^{\prime}_1(x)=\Theta^{\dagger}(M^{\prime}_2(x))~\forall x$. Also, $\sum_xM^{\prime}_1(x)=\sum_x\Theta^{\dagger}(M^{\prime}_2(x))=\Theta^{\dagger}(\mathbbm{1}_{\cK_2})=\mathbbm{1}_{\cK_1}$ as $\Theta^{\dagger}$ is unital. But all $M^{\prime}_1(x)$s may not be positive semi-definite. Now, for an arbitrary Hermitian matrix $A\in\cL(\cK_1)$, let $r_{A}:=\min\{\lambda\in\mathbbm{R}:\lambda\geq 0,0\leq A+\lambda\mathbbm{1}_{\cK_1}\}$. Let $r_{M^{\prime}_1}:=\max_{x}r_{M^{\prime}_1(x)}$. Clearly, we have $0 \leq r_{M^{\prime}_1}<\infty$. Now, consider the set of matrices $M_1=\{M_1(x)=\frac{M^{\prime}_1(x)+r_{M^{\prime}_1}\mathbbm{1}_{\cK_1}}{1+d^2r_{M^{\prime}_1}}\}$. Clearly, $M_1(x)\geq 0~\forall x$ and $\sum_xM_1(x)=\mathbbm{1}_{\cK_1}$. Hence, $M_1$ is a valid measurement. Consider another set of matrices $M_2=\{M_2(x)=\frac{M^{\prime}_2(x)+r_{M^{\prime}_1}\mathbbm{1}_{\cK_2}}{1+d^2r_{M^{\prime}_1}}\}$. Clearly, $M_2(x)\geq 0~\forall x$ and $\sum_xM_2(x)=\mathbbm{1}_{\cK_2}$.  Hence, $M_2$ is also a valid measurement. Then from the unitality and linearity of both $\Lambda^{\dagger}_1$ and $\Lambda^{\dagger}_2$ along with Eq. \eqref{Eq:asymp_hp_only_if} we have 
    \begin{equation}
    \Lambda_2^{\dagger}(M_2(x))=\Lambda_1^{\dagger}(M_1(x))~\forall x. 
    \end{equation}

    Lastly, we have to prove that $M_2$ in an informationally complete measurement. It is sufficient to show that the set of matrices $\{M_2(x)\}$ is linearly independent. Note that as $M^{\prime}_2$ is a minimal informationally complete measurement the set $\{M^{\prime}_2(x)\}$ is linearly independent. For notational simplicity let $t:=\frac{1}{1+d^2r_{M^{\prime}_1}}$ and therefore, $\frac{r_{M^{\prime}_1}}{1+d^2r_{M^{\prime}_1}}=\frac{(1-t)}{d^2}$. Clearly, $0<t\leq 1$. Consider a set of complex number $\{\alpha_1,\ldots,\alpha_{d^2}\}$ such that $\sum_x\alpha_xM_2(x)=0$ holds and let $\alpha:=\sum_x\alpha_x$. Then

    \begin{align}
       &\sum_x\alpha_xM_2(x)=0\nonumber\\
       \Rightarrow&\sum_x\alpha_x\Big(tM^{\prime}_2(x)+\frac{(1-t)}{d^2}\mathbbm{1}_{\cK_2}\Big)=0\nonumber\\
       \Rightarrow&\sum_x\alpha_x(tM^{\prime}_2(x))+\frac{(1-t)\alpha}{d^2}\sum_xM^{\prime}_2(x)=0\nonumber\\
       \Rightarrow&\sum_x\Big(\alpha_xt+\frac{(1-t)\alpha}{d^2}\Big)M^{\prime}_2(x)=0\nonumber\\
       \Rightarrow&\alpha_xt+\frac{(1-t)\alpha}{d^2}=0~\forall x\nonumber\\
       \Rightarrow&\alpha_x=-\frac{(1-t)\alpha}{td^2}=:\beta~\forall x,
   \end{align}
   where in the fifth line, we have used the fact that the set $\{M^{\prime}_2(x)\}$ is linearly independent. Then
   \begin{align}
       &\sum_x\alpha_xM_2(x)=0\nonumber\\
       \Rightarrow&\beta\sum_xM_2(x)=0\nonumber\\
       \Rightarrow&\beta\mathbbm{1}_{\cK_2}=0\nonumber\\
       \Rightarrow&\beta=0.
   \end{align}
   Hence, $\alpha_1=\ldots=\alpha_{d^2}=\beta=0$. Hence, $M_2$ is a minimal informationally complete measurement. Therefore, from Eq. \eqref{Eq:asymp_hp_only_if}, Eq. \eqref{Eq:def_chan_power_HP}, we have $\Lambda_1\succeq_{asymp}\Lambda_2$. Hence, the theorem is proved.
%Note that the set $\{M_2(x)\}^{d^2}_{x=1}$ forms a basis and the map $\Theta^{\dagger}$ is linear. Therefore, as the action of a linear map on the basis elements is enough to completely specify the map, the linear map $\Theta^{dagger}$ is completely specified by Eq. \eqref{Eq:theta_dagger_define}.
\end{proof}

\begin{remark}
 \rm{ From Theorem \ref{Th:herm_ordering}, we can conclude that $\Lambda_1\succeq_{asymp}\Lambda_2\Leftrightarrow\Lambda_1\succeq_{HP}\Lambda_2$. Hence, the relation $\succeq_{asymp}$ is a preorder. Note that Definition \ref{Def:chan_power_IC} is \emph{operationally motivated} (see Remark \ref{Rem:just_asymp} ), and Theorem \ref{Th:herm_ordering} demonstrates its mathematical relation with HPTP maps.  Furthermore, we would like to mention that whenever $\Lambda_1\succeq_{asymp}\Lambda_2$ holds, there may exist multiple Hermitian-preserving trace-preserving linear maps such that $\Lambda_2$ can be written as a composition of these HPTP linear maps with $\Lambda_1$. In other words, it is easy to see that the HPTP linear map $\Theta$ in Theorem \ref{Th:herm_ordering} \emph{may not always} be unique. We would also like to mention that a very special case of Theorem \ref{Th:herm_ordering}, where $\Lambda_2$ is an identity channel, has been discussed in Sec. 2.2
of Ref. \cite{Jiang_2021_physical}. Additionally, our approach is different from the approach in Ref. \cite{Jiang_2021_physical} as informationally complete measurements have not been discussed in that work. Furthermore, in our approach, we did \emph{not} use the notion of quantum statistical morphism (discussed in Refs. \cite{Buscemi_2018_rev_data, Buscemi_2012_comparison, Buscemi_2014_game, Buscemi_2016_divisibility, Buscemi_2016_degradable}), where unlike Definition \ref{Def:chan_power_IC},  the Eq. \eqref{Eq:def_chan_power} is valid for any measurement i.e. not only for a single informationally complete measurement $M_2\in\cM(\cK_2)$. Also, the notion of channel ordering constructed using quantum statistical morphism in Refs. \cite{Buscemi_2018_rev_data, Buscemi_2012_comparison, Buscemi_2014_game, Buscemi_2016_divisibility, Buscemi_2016_degradable} is different from our preorder $\succeq_{asymp}$. This can be easily seen from the fact that, according to the former channel ordering, the depolarising channel and the identity channel are not equivalent, but we will discuss in Example \ref{Examp:depol_iden} that they are equivalent according to our preorder $\succeq_{asymp}$, and this equivalence is operationally meaning. It is also easy to see from this fact that, although former ordering implies the preordering $\succeq_{asymp}$, the converse is not true, and hence, the preorder relation $\succeq_{asymp}$ is \emph{strictly weaker} than the former ordering constructed using quantum statistical morphism.} \label{Rm:Mult_HPTP_maps}\end{remark}.

Next, we give an example of such channels.

\begin{example}
    \rm{Consider the identity channel $\Lambda_1=\mathbbm{I}_{\cH}$ and a completely dephasing quantum channel $\Lambda_2:\cL(\cH)\rightarrow\cL(\cH)$ such that for an arbitrary quantum state $\rho\in\cS(\cH)$, $\Lambda_1(\rho)=\sum_{i}\ket{i}\bra{i}\rho\ket{i}\bra{i}$ where $\{\ket{i}\}$ forms a orthonormal basis of $\cH$. Note that as $\Lambda_2=\Lambda_2\circ\Lambda_1$, we have $\Lambda_1\succeq_{postproc}\Lambda_2$ and therefore, we have  $\Lambda_1\succeq_{asymp}\Lambda_2$. Now, we show that $\Lambda_2\not\succeq_{asymp}\Lambda_1$. Consider an arbitrary measurement $M=\{M(x)\}\in\mathscr{M}(\cH)$. Then we have
    \begin{align}
        \Lambda_2^{\dagger}(M(x))&=\sum_i\ket{i}\bra{i}M(x)\ket{i}\bra{i}~\forall x\nonumber\\
        &=\Lambda_1^{\dagger}[\sum_i\ket{i}\bra{i}M(x)\ket{i}\bra{i}]~\forall x\nonumber\\
        &=\Lambda_1^{\dagger}[M^{\prime}(x)]~\forall x,
    \end{align}
    where $M^{\prime}=\{M^{\prime}(x)=\sum_i\ket{i}\bra{i}M(x)\ket{i}\bra{i}\}$ is a valid measurment. Note that all POVM elements are diagonal in $\{\ket{i}\}$ basis and therefore, it cannot be an informationally complete measurement (as the matrices $\ket{i}\bra{j}$ for $i\neq j$ can not be written as a linear combination of only diagonal matrices). Hence, in this case, there is no informationally complete measurement such that Eq. \eqref{Def:chan_power_IC} is satisfied. Therefore, $\Lambda_2\not\succeq_{asymp}\Lambda_1$.
    }
\end{example}

Now, an important question is whether $\Lambda_1\succeq_{postproc}\Lambda_2$ holds given that $\Lambda_1\succeq_{asymp}\Lambda_2$ holds. We show the negative result through an example. But before that we prove the following useful Lemma.

\begin{lemma}
    If the measurement $M=\{M(x)\}\in\mathscr{M}(\cK_1)^{d^2}_{x=1}$ is a minimal informationally complete measurement then the measurement $M^{\prime}=\{M^{\prime}(x)=tM(x)+(1-t)\frac{\tr[M(x)]}{d}\mathbbm{1}_{\cK_1}\}^{d^2}_{x=1}\in\mathscr{M}(\cK_1)$ is also informationally complete where $0<t\leq 1$ and $d=\dim(\cK_1)$. \label{Lem:noisy_IC}
\end{lemma}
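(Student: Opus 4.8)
The plan is to prove informational completeness of $M'$ by showing that its $d^2$ elements are linearly independent, which suffices because $\dim\cL(\cK_1)=d^2$, so a linearly independent family of that size automatically spans the operator space $\cL(\cK_1)$. First I would record the two facts about the original measurement that drive everything: since $M$ is a measurement, $\sum_x M(x)=\mathbbm{1}_{\cK_1}$, and taking traces gives $\sum_x\tr[M(x)]=d$. These immediately confirm that $M'$ is a legitimate measurement, as each $M'(x)$ is a nonnegative combination of the positive operators $M(x)$ and $\mathbbm{1}_{\cK_1}$, hence positive semidefinite, and $\sum_x M'(x)=t\,\mathbbm{1}_{\cK_1}+(1-t)\mathbbm{1}_{\cK_1}=\mathbbm{1}_{\cK_1}$ using the two facts above.

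The heart of the argument is the linear-independence step, which mirrors the computation already used in the proof of Theorem \ref{Th:herm_ordering}. I would start from an arbitrary complex relation $\sum_x\alpha_x M'(x)=0$ and expand it into $t\sum_x\alpha_x M(x)+\frac{(1-t)}{d}\big(\sum_y\alpha_y\tr[M(y)]\big)\mathbbm{1}_{\cK_1}=0$. The key trick is to replace the leftover identity by $\mathbbm{1}_{\cK_1}=\sum_x M(x)$, so that the whole relation is expressed purely in the basis $\{M(x)\}$. Collecting coefficients and invoking the linear independence of $\{M(x)\}$ (which holds because $M$ is minimal informationally complete) forces $t\alpha_x+\frac{(1-t)}{d}\sum_y\alpha_y\tr[M(y)]=0$ for every $x$. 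Since the second term is independent of $x$ and $t>0$, this says all the $\alpha_x$ share a common value, say $\beta$.

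To finish, I would substitute $\alpha_x=\beta$ back into the original relation $\sum_x\alpha_x M'(x)=0$, which collapses to $\beta\sum_x M'(x)=\beta\,\mathbbm{1}_{\cK_1}=0$, forcing $\beta=0$ and hence $\alpha_x=0$ for all $x$. This establishes linear independence and therefore informational completeness.

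The only place that requires care, and the main difference from the Theorem \ref{Th:herm_ordering} computation, is the appearance of the trace weights $\tr[M(x)]$ in the noise term, which are generally unequal across $x$; but because these weights enter only through the single $x$-independent scalar $\sum_y\alpha_y\tr[M(y)]$, the argument goes through unchanged once the identity is rewritten via $\sum_x M(x)=\mathbbm{1}_{\cK_1}$. I do not expect any genuine obstacle beyond keeping this bookkeeping straight; the statement is essentially a robustness result, saying that mixing a minimal informationally complete measurement with trace-proportional white noise preserves informational completeness for every mixing parameter $t\in(0,1]$.
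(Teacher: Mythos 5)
Your proposal is correct and follows essentially the same route as the paper's proof: reduce to linear independence of $\{M'(x)\}$, rewrite the leftover identity as $\sum_x M(x)$, use linear independence of $\{M(x)\}$ to force all $\alpha_x$ to a common value, and then use $\sum_x M'(x)=\mathbbm{1}_{\cK_1}$ to conclude that value is zero. The only addition beyond the paper's argument is your explicit verification that $M'$ is a valid measurement, which is a harmless (and reasonable) extra.
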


\begin{proof}
   Clearly, it suffices to show that the set $\{M^{\prime}(x)\}^{d^2}_{x=1}$ is linearly independent. Consider a set of complex numbers $\{\alpha_x\}^{d^2}_{x=1}$. Then 
   \begin{align}
       &\sum_x\alpha_xM^{\prime}(x)=0\nonumber\\
       \Rightarrow&\sum_x\alpha_x\Big(tM(x)+(1-t)\frac{\tr[M(x)]}{d}\mathbbm{1}_{\cK_1}\Big)=0\nonumber\\
       \Rightarrow&\sum_x\alpha_xtM(x)+(1-t)\sum_y\alpha_y\frac{\tr[M(y)]}{d}\mathbbm{1}_{\cK_1}=0\nonumber\\
       \Rightarrow&\sum_x\alpha_xtM(x)+(1-t)\sum_y\alpha_y\lambda_y\mathbbm{1}_{\cK_1}=0,\label{Eq:lem_noisy_ic}
   \end{align}
   where $\lambda_y=\frac{\tr[M(y)]}{d}\forall~y$. Therefore, from Eq. \ref{Eq:lem_noisy_ic} we have
   \begin{align}
       &\sum_x\alpha_xtM(x)+\sum_x(1-t)(\sum_y\alpha_y\lambda_y)M(x)=0\nonumber\\
       \Rightarrow&\sum_x(\alpha_xt+(1-t)(\sum_y\alpha_y\lambda_y))M(x)=0.\label{Eq:lem_noisy_ic_alpha}
   \end{align}
   As the set $\{M(x)\}$ is linearly independent, from Eq. \eqref{Eq:lem_noisy_ic_alpha}, we have
   \begin{align}
       &\alpha_xt+(1-t)(\sum_y\alpha_y\lambda_y)=0~\forall x\nonumber\\
       \Rightarrow&\alpha_x=-\frac{(1-t)}{t}(\sum_y\alpha_y\lambda_y)~\forall x.\label{Eq:lem_noisy_ic_lin_in}
   \end{align}
   Hence, we have $\alpha_1=\alpha_2=\ldots=\alpha_{d^2}=-\frac{(1-t)}{t}(\sum_y\alpha_y\lambda_y)=:\alpha$ (say).
   Then we can write

\begin{align}
\sum_x\alpha_x M^{\prime}(x)=&\alpha\sum_x M^{\prime}(x)\nonumber\\
       =&\alpha \mathbbm{1}_{\cK_1}=0\nonumber\\\Rightarrow&\alpha=\alpha_x=0\quad\forall x.
\end{align}
  
   Hence, the set $\{M^{\prime}(x)\}^{d^2}_{x=1}$ is linearly independent.
\end{proof}
Now, we are ready to discuss the example.
\begin{example}
    \rm{Consider a depolarising quantum channel $\Lambda_1:\cL(\cH)\rightarrow\cL(\cH)$ such that for an arbitrary quantum state $\rho\in\cS(\cH)$, $\Lambda_1(\rho)=t\rho+(1-t)\frac{\mathbbm{1}_{\cH}}{d}$ with $0<t\leq 1$ and the identity channel $\Lambda_2=\mathbbm{I}_{\cH}$. At first, we show $\Lambda_1\simeq_{asymp}\Lambda_2$. Consider an informationally complete measurement $M=\{M(x)\}^{d^2}_{x=1}\in\mathscr{M}(\cH)$.
    From Lemma \ref{Lem:noisy_IC}, we obtain that the measurement $\overline{M}=\Lambda^{\dagger}_1(M)=\{tM(x)+(1-t)\frac{\tr(M(x))}{d}\mathbbm{1}_{\cH}\}$
    } is informationally complete. Then we have
    \begin{align}
        \tr[\Lambda_1(\rho)M(x)]&=\tr[\rho\Lambda_1^{\dagger}(M(x))]~\forall x\nonumber\\
        &=\tr[\Lambda_2(\rho)\overline{M}(x)]~\forall x.\label{Eq:examp_dep_id_chan_asymp_eq}
    \end{align}
    Noting that both measurements $M$ and $\overline{M}$ is informationally complete and from Eq. \eqref{Eq:def_chan_power} and Eq. \eqref{Eq:examp_dep_id_chan_asymp_eq}, we have both $\Lambda_1\succeq_{asymp}\Lambda_2$ and $\Lambda_2\succeq_{asymp}\Lambda_1$ and therefore, $\Lambda_1\simeq_{asymp}\Lambda_2$. Note that in our case, $\Lambda_1\circ\Lambda_2=\Lambda_1$ and therefore, $\Lambda_2\succeq_{postproc}\Lambda_1$. But $\Lambda_1\not\succeq_{postproc}\Lambda_2$. This is because if $\Lambda_1\succeq_{postproc}\Lambda_2$ then there exists a quantum channel $\Gamma:\cL(\cH)\rightarrow\cL(\cH)$ such that $\Lambda_2=\Gamma\circ\Lambda_1$. Now, note that for an arbitrary pair of orthogonal pure states $\psi=\ket{\psi}\bra{\psi}$ and $\phi=\ket{\phi}\bra{\phi}$, we have $||\Lambda_2(\psi)-\Lambda_2(\phi)||_1>||\Lambda_1(\psi)-\Lambda_1(\phi)||_1$. But we know that a quantum channel $\Gamma$ cannot increase trace distance. Hence, $\Lambda_1\not\succeq_{postproc}\Lambda_2$. Summarizing, we obtain that although  $\Lambda_1\succeq_{asymp}\Lambda_2$ holds, we have $\Lambda_1\not\succeq_{postproc}\Lambda_2$. Furthermore, as $\Lambda_1\succeq_{asymp}\Lambda_2$ holds, from Theorem \ref{Th:herm_ordering}, we obtain that there exists a Hermitian-preserving trace-preserving linear map $\Theta$ such that $\Lambda_2=\Theta\circ\Lambda_1$. Now, again as we know that $||\Lambda_2(\psi)-\Lambda_2(\phi)||_1>||\Lambda_1(\psi)-\Lambda_1(\phi)||_1$, and positive trace-preserving maps cannot increase trace distance \cite{watrous_2018_theory}, we conclude that the map $\Theta$ is not positive. Note that the more $t$ is close to zero, the more it is difficult to reconstruct the quantum state $\Lambda_2(\rho)=\rho$ from $\Lambda_1(\rho)$ as it is more close it the maximally mixed state. But we are concerned with only the possibility of reconstructing the state $\Lambda_2(\rho)$ from $\Lambda_1(\rho)$. Hence, the identity channel $\Lambda_2$ and the depolarising channel $\Lambda_1$ are equivalent in the sense that, for an given unknown input state, the output state of either of them can be uniquely determined from the output of the other.  \label{Examp:depol_iden}
\end{example}
From Example \ref{Examp:depol_iden}, we make the following simple observations.

\begin{obs}
    \rm{Given two arbitrary quantum channels $\Lambda_1$ and $\Lambda_2$
    \begin{enumerate}
        \item $\Lambda_1\succeq_{asymp}\Lambda_2$ does not imply $\Lambda_1\succeq_{postproc}\Lambda_2$, in general.\label{obs:coun_exp_1}
        \item If $\Lambda_1\succeq_{asymp}\Lambda_2$, the Hermitian preserving trace preserving map $\Theta$ in Theorem \ref{Th:herm_ordering} may not always be positive.\label{obs:coun_exp_2}
    \end{enumerate}
        }\label{obs:coun_exp}
\end{obs}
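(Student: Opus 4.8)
The plan is to establish both parts of the Observation by exhibiting a single explicit pair of channels --- the depolarizing channel and the identity channel analyzed in Example~\ref{Examp:depol_iden} --- and reading off the two claims from its properties. Concretely, I would take $\Lambda_1:\cL(\cH)\to\cL(\cH)$ to be $\Lambda_1(\rho)=t\rho+(1-t)\mathbbm{1}_{\cH}/d$ with $0<t<1$ and $d=\dim(\cH)$, and $\Lambda_2=\mathbbm{I}_{\cH}$ the identity channel. The reason this pair is the right gadget is that $\Lambda_1$ is lossless at the level of measurement statistics (its dual sends an informationally complete POVM to another informationally complete POVM) but strictly contractive at the level of states, so it separates the asymptotic relation from the post-processing one.

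First I would verify $\Lambda_1\succeq_{asymp}\Lambda_2$. Starting from any minimal informationally complete measurement $M=\{M(x)\}_{x=1}^{d^2}$, I apply $\Lambda_1^{\dagger}$ to each POVM element; a direct computation gives $\Lambda_1^{\dagger}(M(x))=tM(x)+(1-t)\tr[M(x)]\mathbbm{1}_{\cH}/d$, which is exactly the noisy POVM of Lemma~\ref{Lem:noisy_IC} and hence again informationally complete. Writing $\overline{M}=\Lambda_1^{\dagger}(M)$ and using the Heisenberg-picture identity $\tr[\Lambda_1(\rho)M(x)]=\tr[\rho\Lambda_1^{\dagger}(M(x))]=\tr[\Lambda_2(\rho)\overline{M}(x)]$, Definition~\ref{Def:chan_power_IC} is satisfied with the informationally complete measurement sitting on the $\Lambda_2$ side, giving $\Lambda_1\succeq_{asymp}\Lambda_2$ (and, symmetrically, $\simeq_{asymp}$).

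Next, for claim~\ref{obs:coun_exp_1}, I would argue by contradiction using contractivity of the trace distance. If $\Lambda_1\succeq_{postproc}\Lambda_2$ held, there would be a channel $\Gamma$ with $\Lambda_2=\Gamma\circ\Lambda_1$. Choosing two orthogonal pure states $\psi,\phi$, one has $\|\Lambda_2(\psi)-\Lambda_2(\phi)\|_1=2$ while $\|\Lambda_1(\psi)-\Lambda_1(\phi)\|_1=2t<2$; since a quantum channel cannot increase trace distance, $\|\Lambda_2(\psi)-\Lambda_2(\phi)\|_1\le\|\Lambda_1(\psi)-\Lambda_1(\phi)\|_1$, a contradiction, so no such $\Gamma$ exists. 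Claim~\ref{obs:coun_exp_2} then follows by running the same trace-distance estimate against the map $\Theta$ supplied by Theorem~\ref{Th:herm_ordering}: $\Theta$ exists with $\Lambda_2=\Theta\circ\Lambda_1$, and were it positive and trace preserving it too would be trace-distance nonincreasing, reproducing the same contradiction; hence $\Theta$ cannot be positive.

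The only genuinely nontrivial input is the contractivity fact, so the main obstacle is invoking it at the right level of generality: for claim~\ref{obs:coun_exp_1} I only need that CPTP maps contract the trace norm, but for claim~\ref{obs:coun_exp_2} I crucially need the stronger (but standard) statement that every positive trace-preserving map is trace-distance nonincreasing, which is precisely what forces $\Theta$ to leave the positive cone. I would also double-check that the strict inequality $2t<2$ genuinely requires $t<1$ --- the case $t=1$ gives $\Lambda_1=\Lambda_2$ and both relations trivially hold --- so the counterexample must be stated for $0<t<1$.
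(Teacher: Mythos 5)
Your proposal is correct and follows essentially the same route as the paper: the depolarizing-versus-identity pair of Example~\ref{Examp:depol_iden}, Lemma~\ref{Lem:noisy_IC} to get the asymptotic relation, and trace-distance contractivity (of CPTP maps for claim~\ref{obs:coun_exp_1}, of positive trace-preserving maps for claim~\ref{obs:coun_exp_2}) to rule out post-processing and positivity. Your remark that the counterexample needs $0<t<1$ rather than $0<t\leq 1$ is a small but valid sharpening of the paper's statement of the example.
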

%Now, as the statement of Theorem \ref{Th:herm_ordering} the question is whether for two given arbitrary quantum channels $\Lambda_1:\cL(\cH)\rightarrow\cL(\cK_1)$ and $\Lambda_2:\cL(\cH)\rightarrow\cL(\cK_2)$, the relation $\Lambda_2=\Theta\circ\Lambda_1$ implies $\Lambda_1\succeq_{asymp}\Lambda_2$ where $\Theta:\cL(\cK_1)\rightarrow\cL(\cK_2)$ is an Hermitian preserving trace preserving linear map. We answer this question through the following example.

%\begin{proposition}
   % For two given arbitrary quantum channels $\Lambda_1:\cL(\cH)\rightarrow\cL(\cK_1)$ and $\Lambda_2:\cL(\cH)\rightarrow\cL(\cK_2)$, if $\Lambda_2=\Theta\circ\Lambda_1$ where $\Theta:\cL(\cK_1)\rightarrow\cL(\cK_2)$ is a positive trace preserving linear map then the relation $\Lambda_1\succeq_{asymp}\Lambda_2$ holds.
%\end{proposition}

%\begin{proof}
   % Consider an informationally complete measurement $M_2\in\mathscr{M}(\cH)$. Now, as $\Lambda^{\dagger}_2=\Lambda^{\dagger}_1\circ\Theta^{\dagger}$, we have 

    %\begin{align}
        %\Lambda^{\dagger}_2(M_1(x))=\Lambda^{\dagger}_1(M_2(x))~\forall X
    %\end{align}
    %where $M_1(X)=\Theta^{\dagger}(M_2(x))~\forall x$. As $\Theta^{\dagger}$ is a positive unital map, we have $M_2(x)\geq 0~\forall x$ and $\sum_xM_2(x)=\mathbbm{1}_{\cK_1}$. Hence $M_2=\{M_2(x)\}$ is a valid measurement. Hence, from Eq. \ref{Eq:def_chan_power_HP} $\Lambda_1\succeq_{asymp}\Lambda_2$.
    
%\end{proof}
Now, if a quantum channel $\Lambda_1\succeq_{asymp}\Lambda_2$ from Remark \ref{Rem:just_asymp}, we know that with a large number of copies of the quantum state $\Lambda_1(\rho,)$ the quantum state $\Lambda_2(\rho)$ can be determined for an arbitrary unknown state $\rho$. So general intuition suggests that from Eq. \eqref{Eq:def_chan_power} if we know about the incompatibility of the minimal informationally complete measurement $M_2$ and the channel $\Lambda_1$, incompatibility between $\Lambda_1$ and $\Lambda_2$ can be determined. But we show following, by an explicit example, that it is not the case.

\begin{example}[Implication on quantum incompatibility]\rm{Let $\Lambda_1$ be the depolarizing channel and $\Lambda_2$ be the identity channel defined in Example \ref{Examp:depol_iden}. Then $\overline{\Lambda}_1$ represents the conjugate channel of $\Lambda_1$. Next, consider two quantum channels, $\Phi_1=\overline{\Lambda}_1$ and $\Phi_2=\Lambda_2$. Also consider an informationally complete measurement $M=\{M(x)\}\in\mathscr{M}(\cH)^{d^2}_{x=1}$. As $\Phi_2$ is the identity channel, from Lemma \ref{Lem:noisy_IC}, we know that
\begin{align}
    \Phi_2^{\dagger}(\overline{M}(x))=\overline{M}(x)=\overline{\Phi}_1^{\dagger}(M(x))=\Lambda_1^{\dagger}(M(x))\quad\forall x, \label{Eq:meas_chan_incomp} 
\end{align}
is also an informationally complete measurement. From Proposition \ref{Prop:meas_chan_incom}, we get that Eq. \eqref{Eq:meas_chan_incomp} is equivalent to the fact that the informationally complete measurement $\Phi_2^{\dagger}(\overline{M})$ and the quantum channel $\Phi_1$ are compatible. With all of these definitions, our goal is to show that the compatibility of $\Phi_2^{\dagger}(\overline{M})$ and $\Phi_1$ can't be used to determine the compatibility of the quantum channels $\Phi_1$ and $\Phi_2$ even though $\overline{M}$ is informationally complete.
%As $\Phi_2$ is just the identity channel, we can also then say that measurement $\Phi_2^{\dagger}(\overline{M})$ and the channel $\Phi_1$ are compatible.

 By Definition \ref{Def:chan_power_IC} and Eq. \eqref{Eq:def_chan_power_HP}, we see that Eq. \eqref{Eq:meas_chan_incomp} is equivalent to
\begin{align}
    \overline{\Phi}_1\succeq_{asymp}\Phi_2\quad\text{or}\quad\Lambda_1\succeq_{asymp}\Lambda_2.
\end{align}

%In other words the depolarizing channel $\overline{\Phi}_1=\Lambda_1$ is as powerful as the identity channel $\Phi_2=\Lambda_2$ in asymptotic sense.

But from Example \ref{Examp:depol_iden} we already know that $\Lambda_1\not\succeq_{postproc}\Lambda_2$ (also see statement \ref{obs:coun_exp_1} of Observation \ref{obs:coun_exp}). Thus from the condition \ref{Prop:chan_chan_incomp_point_3} in Proposition \ref{Prop:chan_chan_incomp}, we can conclude that $\Phi_1$ and $\Phi_2$ are not compatible.} \label{Examp:incomp_asymp}
\end{example}

\begin{remark}
\rm{Example \ref{Examp:incomp_asymp} may sound simple to some readers. But our goal was to illustrate that the statement ``the incompatibility of channels cannot be cast in terms of measurement-channel compatibility, in general" can be understood as a consequence of statement \ref{obs:coun_exp_1} of Observation \ref{obs:coun_exp} .}
\end{remark}

Similarly, from the statement \ref{obs:coun_exp_1} of Observation \ref{obs:coun_exp}, one can argue that measurement-channel compatibility cannot be cast in terms of compatibility of measurements.

\begin{theorem}
    Given two quantum channels $\Lambda_1\in\mathscr{C}(\cH,\cK_1)$ and $\Lambda_2\in\mathscr{C}(\cH,\cK_2)$, the relation $\Lambda_1\succeq_{asymp}\Lambda_2$ is satisfied if and only if $\ker(\Lambda_1)\subseteq \ker(\Lambda_2)$.\label{Th:kern_ordering}
\end{theorem}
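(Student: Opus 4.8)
The plan is to route everything through Theorem~\ref{Th:herm_ordering}, which says that $\Lambda_1\succeq_{asymp}\Lambda_2$ holds if and only if there exists a Hermitian-preserving trace-preserving map $\Theta:\cL(\cK_1)\to\cL(\cK_2)$ with $\Lambda_2=\Theta\circ\Lambda_1$. Granting this, the forward implication is immediate: if such a $\Theta$ exists and $X\in ker(\Lambda_1)$, i.e. $\Lambda_1(X)=0$, then $\Lambda_2(X)=\Theta(\Lambda_1(X))=\Theta(0)=0$, so $X\in ker(\Lambda_2)$, whence $ker(\Lambda_1)\subseteq ker(\Lambda_2)$. The real content is the converse, where the kernel inclusion must be leveraged to manufacture the map $\Theta$.

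For the converse, assume $ker(\Lambda_1)\subseteq ker(\Lambda_2)$ and construct $\Theta$. First I would define a partial map $\Theta_0$ on the range $\mathrm{ran}(\Lambda_1)\subseteq\cL(\cK_1)$ by $\Theta_0(\Lambda_1(X)):=\Lambda_2(X)$. This is well defined precisely because of the hypothesis: if $\Lambda_1(X)=\Lambda_1(Y)$ then $X-Y\in ker(\Lambda_1)\subseteq ker(\Lambda_2)$, so $\Lambda_2(X)=\Lambda_2(Y)$; and $\Theta_0$ is clearly linear on $\mathrm{ran}(\Lambda_1)$. Next I would verify that $\Theta_0$ already inherits the two desired properties on its domain. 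Since $\Lambda_1(X)^\dagger=\Lambda_1(X^\dagger)$, the range is self-adjoint; moreover any Hermitian $A=\Lambda_1(X)$ admits a Hermitian preimage, because writing $X=X_H+iX_A$ with $X_H,X_A$ Hermitian forces the anti-Hermitian part $\Lambda_1(X_A)$ to vanish, so $X_A\in ker(\Lambda_1)\subseteq ker(\Lambda_2)$ and $\Theta_0(A)=\Lambda_2(X_H)\in Herm(\cK_2)$. Trace preservation on the range is similarly direct: $\tr[\Theta_0(\Lambda_1(X))]=\tr[\Lambda_2(X)]=\tr[X]=\tr[\Lambda_1(X)]$.

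The remaining and main task is to extend $\Theta_0$ from $\mathrm{ran}(\Lambda_1)$ to all of $\cL(\cK_1)$ while keeping both properties \emph{globally}; this is the step I expect to be the obstacle, since the Hermitian-preservation and trace constraints must be met simultaneously. I would handle it by passing to the real picture: because $\mathrm{ran}(\Lambda_1)$ is self-adjoint it splits as $V_H\oplus iV_H$ with $V_H:=\mathrm{ran}(\Lambda_1)\cap Herm(\cK_1)$, and a Hermitian-preserving complex-linear map is equivalent data to a real-linear map $Herm(\cK_1)\to Herm(\cK_2)$ followed by complex-linear extension. It therefore suffices to extend the real-linear, trace-preserving map $\theta_0:=\Theta_0|_{V_H}$ to a real-linear trace-preserving map $\theta:Herm(\cK_1)\to Herm(\cK_2)$. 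Choosing any real complement $V_H^{\,c}$ with $Herm(\cK_1)=V_H\oplus V_H^{\,c}$, I would set $\theta(B_k):=\frac{\tr[B_k]}{\dim\cK_2}\,\mathbbm{1}_{\cK_2}$ on a basis $\{B_k\}$ of $V_H^{\,c}$; each value is Hermitian with the correct trace, so by real-linearity $\tr[\theta(A)]=\tr[A]$ for all $A\in Herm(\cK_1)$. The complex-linear extension $\Theta$ is then Hermitian-preserving and trace-preserving on $\cL(\cK_1)$ and restricts to $\Theta_0$ on $\mathrm{ran}(\Lambda_1)$, giving $\Lambda_2=\Theta\circ\Lambda_1$; Theorem~\ref{Th:herm_ordering} then yields $\Lambda_1\succeq_{asymp}\Lambda_2$. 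The key point that makes the obstacle tractable is that the real-picture reduction decouples the scalar trace functional from the rest of the construction, so the single linear trace constraint can always be absorbed into the identity direction $\mathbbm{1}_{\cK_2}$ without disturbing Hermiticity.
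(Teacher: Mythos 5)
Your proof is correct and follows essentially the same route as the paper's: the converse is the same construction, namely defining $\Theta$ on $Im(\Lambda_1)$ by $\Lambda_1(\rho)\mapsto\Lambda_2(\rho)$ (well defined by the kernel inclusion) and extending to a complement via $Y\mapsto\tr[Y]\,\mathbbm{1}_{\cK_2}/\dim(\cK_2)$, with only cosmetic differences (you work in the real space of Hermitian operators with an arbitrary real complement, the paper uses the Hilbert--Schmidt orthogonal complement of $Im(\Lambda_1)$ in $\cL(\cK_1)$). Your forward direction routes through Theorem~\ref{Th:herm_ordering} rather than the informationally complete measurement definition, which is a slightly cleaner but equivalent argument.
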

\begin{proof}
    If $\Lambda_1\succeq_{asymp}\Lambda_2$, then equivalently, we have
    \begin{align}
        \tr[\Lambda_2(\rho)M_2(x)]=\tr[\Lambda_1(\rho)M_1(x)],
    \end{align}

    for every $\rho\in\cL(\cH)$. Here, $M_2\in\mathscr{M}(\cK_2)$ is a minimal informationally complete measurement and $M_1\in\mathscr{M}(\cK_1)$ is another measurement.
Let us first assume $\rho\in \ker(\Lambda_1)$ \textit{i.e}
\begin{align}
    \Lambda_1(\rho)=0.
\end{align}
This implies that
\begin{align}
    \tr[\Lambda_1(\rho)M_1(x)]=\tr[\Lambda_2(\rho)M_2(x)]=0 \quad\forall~x,
\end{align}
As $M_2$ is informationally complete
\begin{align}
    \tr[\Lambda_2(\rho)M_2(x)]=0~\forall x\Rightarrow~\Lambda_2(\rho)=0.
\end{align}
Hence,
\begin{align}
    \rho\in \ker(\Lambda_1)\Rightarrow \rho\in \ker(\Lambda_2).\label{Eq:ker1_imply_ker2}
\end{align}

 Thus, we can conclude that
\begin{align}
    \ker(\Lambda_1)\subseteq \ker(\Lambda_2).\label{Eq:kernal_subset}
\end{align}

Conversely, let $ \ker(\Lambda_1)\subseteq \ker(\Lambda_2)$. Then, we can define a map $\Theta_1:Im(\Lambda_1)\rightarrow \cL(\cK_2)$ such that
\begin{align} \Theta_1(\Lambda_1(\rho)):=\Lambda_2(\rho)\quad\forall~\rho\in\cL(\cH).
\end{align}
Here, the symbol $Im$ refers to the image of a linear map. Consider $\rho,\sigma\in\cL(\cH)$ such that $\Lambda_1(\rho)=\Lambda_1(\sigma)$. This implies
\begin{align}
    &\Lambda_1(\rho-\sigma)=0\nonumber\\
    \Rightarrow& \rho-\sigma\in \ker(\Lambda_1).
\end{align}

As $\ker(\Lambda_1)\subseteq \ker(\Lambda_2)$ we have
\begin{align}
    &\rho-\sigma\in \ker(\Lambda_2)\nonumber\\\Rightarrow&\Lambda_2(\rho)=\Lambda_2(\sigma).
    \end{align}
Thus $\Lambda_1(\rho)=\Lambda_1(\sigma)\Rightarrow \Theta(\Lambda_1(\rho))=\Theta(\Lambda_1(\sigma))$. Hence, the map $\Theta_1$ is well-defined. 

For $\alpha\Lambda_1(\rho)+\beta\Lambda_1(\sigma)\in Im(\Lambda_1)$ where $\alpha,\beta\in\mathbbm{C}$, we have
\begin{align}
\Theta_1(\alpha\Lambda_1(\rho)+\beta\Lambda_1(\sigma))=&\Theta_1(\Lambda_1(\alpha\rho+\beta\sigma))\nonumber\\
=&\Lambda_2(\alpha\rho+\beta\sigma)\nonumber\\
=&\alpha\Lambda_2(\rho)+\beta\Lambda_2(\sigma)\nonumber\\
=&\alpha\Theta_1(\Lambda_1(\rho))+\beta\Theta_1(\Lambda_1(\sigma)).
\end{align}
Thus, $\Theta_1$ is a linear map.

Next, as we have
\begin{align}
\Theta_1((\Lambda_1(\rho))^{\dagger})&=\Theta_1(\Lambda_1(\rho^{\dagger}))\nonumber\\
&=\Lambda_2(\rho^{\dagger})\nonumber\\
&=(\Lambda_2(\rho)
)^{\dagger}=(\Theta_1(\Lambda_1(\rho)))^{\dagger}.\end{align}
The linear map $\Theta_1$ is Hermitian-preserving. 

Also
\begin{align}
    \tr[\Theta(\Lambda_1(\rho))]=&\tr[\Lambda_2(\rho)]\nonumber\\
    =&\tr[\rho]=\tr[\Lambda_1(\rho)].
\end{align}
Thus, the map $\Theta_1$ is trace-preserving also.

Next, define another map $\Theta_2:Im(\Lambda_1)^{\perp}\rightarrow \cL(\cK_2)$ such that for all $Y\in Im(\Lambda_1)^{\perp}$
\begin{align}
    \Theta_2(Y):=Tr[Y]\frac{\mathbbm{1}_{\cK_2}}{dim(\cK_2)}.
\end{align}
We can see that it is linear, as for any $\alpha Y_1+\beta Y_2\in Im(\Lambda_1)^{\perp}$ we have
\begin{align}
    \Theta_2(\alpha Y_1+\beta Y_2)=\alpha\Theta_2(Y_1)+\beta \Theta_2(Y_2).
\end{align}

Also as 
\begin{align}
\Theta_2(Y^\dagger)=&(\tr[Y])^*\frac{\mathbbm{1}_{\cK_2}}{dim(\cK_2)}\nonumber\\
    =&(\Theta_2(Y))^{\dagger}.
\end{align}
The linear map $\Theta_2$ is Hermitian-preserving. 

It is also trace-preserving as
\begin{align}
    \tr[\Theta_2(Y)]=\tr[Y].
\end{align}

Now, as we know
\begin{align}
    \cL(\cK_1)=Im(\Lambda_1)\oplus Im(\Lambda_1)^{\perp},
\end{align}
an arbitrary $Z\in\cL(\cK_1)$ can be written in a unique decomposition as
\begin{align}
    Z=X\oplus 0_{Im(\Lambda_1)^{\perp}}+0_{Im(\Lambda_1)}\oplus Y,\label{Eq:Z_form_direct_sum}
\end{align}
where $X\in Im(\Lambda_1)$ and $Y\in Im(\Lambda_1)^{\perp}$. Here, $0_{Im(\Lambda_1)^{\perp}}$ is the null matrix in the vector space $Im(\Lambda_1)^{\perp}$ and $0_{Im(\Lambda_1)}$ is the null matrix in the vector space $Im(\Lambda_1)$. Let us define another map $\Theta:\cL(\cK_1)\rightarrow\cL(\cK_2)$ as
\begin{align}
    \Theta(Z):=\Theta_1(X)+\Theta_2(Y).
\end{align}
For $\alpha Z_1+\beta Z_2\in Im(\Lambda_1)^{\perp}$, where $Z_1=X_1\oplus 0_{Im(\Lambda_1)^{\perp}}+0_{Im(\Lambda_1)}\oplus Y_1$ and $Z_2=X_2\oplus 0_{Im(\Lambda_1)^{\perp}}+0_{Im(\Lambda_1)}\oplus Y_2$, we have
\begin{align}
    \Theta(\alpha Z_1+\beta Z_2)=&\Theta_1(\alpha X_1+\beta X_2)+\Theta_2(\alpha Y_1+\beta Y_2),\nonumber\\
    =&\alpha(\Theta_1(X_1)+\Theta_2(Y_1))+\beta(\Theta_1(X_2)+\Theta_2(Y_2)),\nonumber\\
    =&\alpha\Theta(Z_1)+\beta\Theta(Z_2).
\end{align}
Thus $\Theta$ is a linear map.

Clearly, for an arbitrary $Z\in\cL(\cK_1)$ (as in Eq. \eqref{Eq:Z_form_direct_sum}) we have
\begin{align}
    \Theta(Z^{\dagger})=&\Theta_1(X^{\dagger})+\Theta_2(Y^{\dagger}),\nonumber\\
    =&(\Theta_1(X))^{\dagger}+(\Theta_2(Y))^{\dagger}=\Theta(Z)^{\dagger},
\end{align}
and 
\begin{align}
    \tr[\Theta(Z)]=&\tr[\Theta_1(X)]+\tr[\Theta_2(Y)],\nonumber\\
    =& \tr[X]+\tr[Y]=\tr[Z].
\end{align}
Hence, $\Theta$ is a Hermitian-preserving trace-preserving linear map.
Now, as $\Lambda_1(\rho)\in Im(\Lambda_1)$ for an arbitrary $\rho\in\cL(\cH)$, we have
\begin{align}
    \Theta(\Lambda_1(\rho))=\Theta_1(\Lambda_1(\rho))+\Theta_2(0_{Im(\Lambda_1)^{\perp}})=\Lambda_2(\rho).
\end{align}
Hence, we have $\Lambda_2=\Theta\circ\Lambda_1$ which in turn, by Theorem \ref{Th:herm_ordering} implies $\Lambda_1\succeq_{asymp}\Lambda_2$.
\end{proof}

\begin{theorem}
    Consider an arbitrary pair of quantum channels $\Lambda_1\in\mathscr{C}(\cH,\cK_1)$ and $\Lambda_2\in\mathscr{C}(\cH,\cK_2)$. The condition $\Lambda_1\succeq_{asymp}\Lambda_2$ holds if and only if the equality $\mid\mid\Lambda_1(\rho)-\Lambda_1(\sigma)\mid\mid_1=0$ implies the equality $\mid\mid\Lambda_2(\rho)-\Lambda_2(\sigma)\mid\mid_1=0,~\text{where}~\rho,\sigma\in\cS(\cH)$. \label{Th:asymp_trace_dist}
\end{theorem}

\begin{proof}
    Assume that $\Lambda_1\succeq_{asymp}\Lambda_2$ holds. From Theorem \ref{Th:kern_ordering}, we have $\ker(\Lambda_1)\subseteq\ker(\Lambda_2)$. Then
    \begin{align}
        &\mid\mid\Lambda_1(\rho)-\Lambda_1(\sigma)\mid\mid_1=0,\nonumber\\
        \Rightarrow&\Lambda_1(\rho-\sigma)=0,\nonumber\\
        \Rightarrow &(\rho-\sigma)\in \ker{\Lambda_1},\nonumber\\
        \Rightarrow &(\rho-\sigma)\in \ker{\Lambda_2},\nonumber\\
        \Rightarrow &\Lambda_2(\rho-\sigma)=0,\nonumber\\
        \Rightarrow &\mid\mid\Lambda_2(\rho)-\Lambda_2(\sigma)\mid\mid_1=0.\nonumber\\
    \end{align}

    Conversely, assume that the equality $\mid\mid\Lambda_1(\rho)-\Lambda_1(\sigma)\mid\mid_1=0$ implies the equality $\mid\mid\Lambda_2(\rho)-\Lambda_2(\sigma)\mid\mid_1=0$. Now note that as both $\Lambda_1$ and $\Lambda_2$ are quantum channels, they are also Hermitian- preserving. Therefore, they satisfy the condition $\Lambda_i(X^{\dagger})=\Lambda_i(X)^{\dagger}~\forall X\in\cL(\cH), i=1,2$. Therefore, for an arbitrary $X\in\cL(\cH)$, if $X\in\ker(\Lambda_i)$, then $X^{\dagger}\in\ker(\Lambda_i)~\forall i=1,2$. Therefore, $\frac{X+X^{\dagger}}{2}\in\ker(\Lambda_i)~\forall i=1,2$ and $\frac{X-X^{\dagger}}{2}\in\ker(\Lambda_i)~\forall i=1,2$. Furthermore, as both $\Lambda_1$ and $\Lambda_2$ are trace-preserving, $\tr[X]=0$. Observing these facts, consider a complete basis $\{X_j\}^n_{j=1}$ of $\ker(\Lambda_1)$. Define a set $\{Y_j\}^n_{j=1}$ such that $Y_j=\frac{X+X^{\dagger}}{2}$ $\forall$ $j\in\{1,\ldots,n\}$ and $Y_{n+j}=i\frac{X-X^{\dagger}}{2}$ $\forall$ $j\in\{1,\ldots,n\}$. Hence, $Y_j$ is Hermitian $\forall ~j\in\{1,\ldots,2n\}$. Then, for an arbitrary matrix $A=\sum_jc^{\prime}_jX_j\in\ker(\Lambda_1)$, we have
    \begin{align}
        A&=\sum^{n}_{j=1}c^{\prime}_jX_j,\nonumber\\
        &=\sum^{n}_{j=1}c^{\prime}_j(\frac{X_j+X_j^{\dagger}}{2}+\frac{X_j-X_j^{\dagger}}{2}),\nonumber\\
        &=\sum^{2n}_{j=1}c_jY_j,
    \end{align}
    where $c_j=ic_{n+j}=c^{\prime}_j\forall j\in\{1,\ldots,n\}$. Hence, $\{Y_j\}^{2n}_{j=1}$ forms a overcomplete basis of $\ker(\Lambda_1)$. Furthermore, as $\tr[X_j]=0~\forall j\in\{1,\ldots,n\}$, we have $\tr[Y_j]=0~\forall j\in\{1,\ldots,2n\}$. Now, for an arbitrary $j\in\{1,\ldots,2n\}$, if $Y_j=0$, we trivially have $\Lambda_1(Y_j)=\Lambda_2(Y_j)=0$ and hence, $Y_j\in\ker(\Lambda_2)$. But if $Y_j\neq 0$, we can write $Y_j=Q_j-R_j$ where $Q_j\geq 0$ and $R_j\geq 0$ as $Y_j$ is Hermitian. Additionally, as $\tr[Y_j]=0$, we have both $Q_j,R_j\neq 0$ which in turn implies, $\tr[Q_j]> 0$ and $\tr[R_j]> 0$. Let us define $\tr[Q_j]=\tr[R_j]:=\gamma_j$. Now, consider the density matrices $\rho_j:=\frac{Q_j}{\tr[Q_j]}$ and $\sigma_j:=\frac{R_j}{\tr[R_j]}$. Then

    \begin{align}
        &\Lambda_1(Y_j)=0,\nonumber\\
        \Rightarrow&\gamma_j\Lambda_1(\rho_j-\sigma_j)=0,\nonumber\\
        \Rightarrow&\gamma_j\Lambda_2(\rho_j-\sigma_j)=0,\nonumber\\
        \Rightarrow&\Lambda_2(Q_j-R_j)=0,\nonumber\\
        \Rightarrow&\Lambda_2(Y_j)=0,\nonumber\\
    \end{align}
    holds $\forall$ $j\in\{1,\ldots,2n\}$. Hence, we have $\Lambda_2(A)=0$ and therefore, $A\in\ker(\Lambda_2)$. As $A\in\ker(\Lambda_1)$ is chosen arbitrarily, we have $\ker(\Lambda_1)\subseteq\ker(\Lambda_2)$. Hence, from Theorem \ref{Th:kern_ordering}, we have $\Lambda_1\succeq_{asymp}\Lambda_2$.

\end{proof}

\begin{remark}
    \rm{From Definition \ref{Def:chan_power_IC}, Proposition \ref{Prop:chan_power_min_IC}, Theorem \ref{Th:herm_ordering}, Theorem \ref{Th:kern_ordering}, and Theorem \ref{Th:asymp_trace_dist} we obtain five equivalent definitions of the relation $\succeq_{asymp}$.} 
\end{remark}

\subsection{Hierarchy of the preorders}\label{Sec:Main_res_Hierarchy}
Consider a quantum channel $\Lambda:\cL(\cH)\rightarrow\cL(\cK)$. Let us denote the set of all quantum channels obtained by concatenating it with Hermitian-preserving trace-preserving linear maps as $\mathscr{C}^{HP}_{\cH\rightarrow\overline{\cH}}(\Lambda)$. In other words
\begin{align}
    \mathscr{C}^{HP}_{\cH\rightarrow\overline{\cH}}(\Lambda):=\{\Phi:\Phi\in\mathscr{C}(\cH,\overline{\cH}),\Lambda\succeq_{HP}\Phi\}.
\end{align}
 It is worth mentioning that the set $\mathscr{C}^{HP}_{\cH\rightarrow\overline{\cH}}(\Lambda)$ is a convex set. This can easily be seen by choosing two quantum channels $\Phi_1\in\mathscr{C}^{HP}_{\cH\rightarrow\overline{\cH}}(\Lambda)$ and $\Phi_2\in\mathscr{C}^{HP}_{\cH\rightarrow\overline{\cH}}(\Lambda)$ such that $\Phi_1=\Theta_1\circ\Lambda$ and $\Phi_2=\Theta_2\circ\Lambda$. Here, $\Theta_1,\Theta_2\in\mathscr{C}_{HP}(\cK,\overline{\cH})$. By taking the convex combination of $\Phi_1$ and $\Phi_2$ we can easily see that
 \begin{align}
     \lambda\Phi_1+(1-\lambda)\Phi_2=&\lambda\Theta_1\circ\Lambda+(1-\lambda)\Theta_2\circ\Lambda\nonumber\\
     =&(\lambda\Theta_1+(1-\lambda)\Theta_2)\circ\Lambda\in \mathscr{C}^{HP}_{\cH\rightarrow\overline{\cH}}(\Lambda),
 \end{align}
 for $0\leq\lambda\leq 1$ as $\lambda\Theta_1+(1-\lambda)\Theta_2\in\mathscr{C}_{HP}(\cK,\overline{\cH})$. Hence $\lambda\Phi_1+(1-\lambda)\Phi_2\succeq_{HP}\Lambda$. Thus, $\mathscr{C}^{HP}_{\cH\rightarrow\overline{\cH}}(\Lambda)$ forms a convex set.
 
 From Definition
\ref{Def:chan_power_IC}, we can also define the set of channels that are at least as powerful as the quantum channel $\Lambda$ in the asymptotic sense as $\mathscr{C}^{asymp}_{\cH\rightarrow\overline{\cH}}(\Lambda)$ such that
\begin{align}
    \mathscr{C}^{asymp}_{\cH\rightarrow\overline{\cH}}(\Lambda)=\{\Phi:\Phi\in\mathscr{C}(\cH,\overline{\cH}),\Lambda\succeq_{asymp}\Phi\},
\end{align}
 From Theorem \ref{Th:herm_ordering} we get $\mathscr{C}^{asymp}_{\cH\rightarrow\overline{\cH}}(\Lambda)=\mathscr{C}^{HP}_{\cH\rightarrow\overline{\cH}}(\Lambda)$.

 Similarly, let us denote the set of all quantum channels obtained by concatenating $\Lambda$ with positive maps as $\mathscr{C}^{P}_{\cH\rightarrow\overline{\cH}}(\Lambda)$. In other words
\begin{align}
        \mathscr{C}^{P}_{\cH\rightarrow\overline{\cH}}(\Lambda)=\{\Phi:\Phi\in\mathscr{C}(\cH,\overline{\cH}),\Lambda\succeq_{P}\Phi\}.
\end{align}
 Just like the previous case the set $\mathscr{C}^{P}_{\cH\rightarrow\overline{\cH}}(\Lambda)$ is also a convex set. 

For completeness, we also denote the set of all quantum channels obtained by post-processing it with another quantum channel as $\mathscr{C}^{CP}_{\cH\rightarrow\overline{\cH}}(\Lambda)$ such that
\begin{align}
    \mathscr{C}^{CP}_{\cH\rightarrow\overline{\cH}}(\Lambda)=\{\Phi:\Phi\in\mathscr{C}(\cH,\overline{\cH}),\Lambda\succeq_{postproc}\Phi\}.
\end{align}
It is known that the set $\mathscr{C}^{CP}_{\cH\rightarrow\overline{\cH}}(\Lambda)$ is also convex.

It is easy to see that the above four sets of channels satisfy the following subset relations, as illustrated in Fig. \ref{fig_subset_rel}:
\begin{align}
   \mathscr{C}^{CP}_{\cH\rightarrow\overline{\cH}}(\Lambda)\subseteq\mathscr{C}^{P}_{\cH\rightarrow\overline{\cH}}(\Lambda)\subseteq\mathscr{C}^{asymp}_{\cH\rightarrow\overline{\cH}}(\Lambda)=\mathscr{C}^{HP}_{\cH\rightarrow\overline{\cH}}(\Lambda).
\end{align}

\begin{figure}
    \centering
    \includegraphics[height=140px, width =256px]{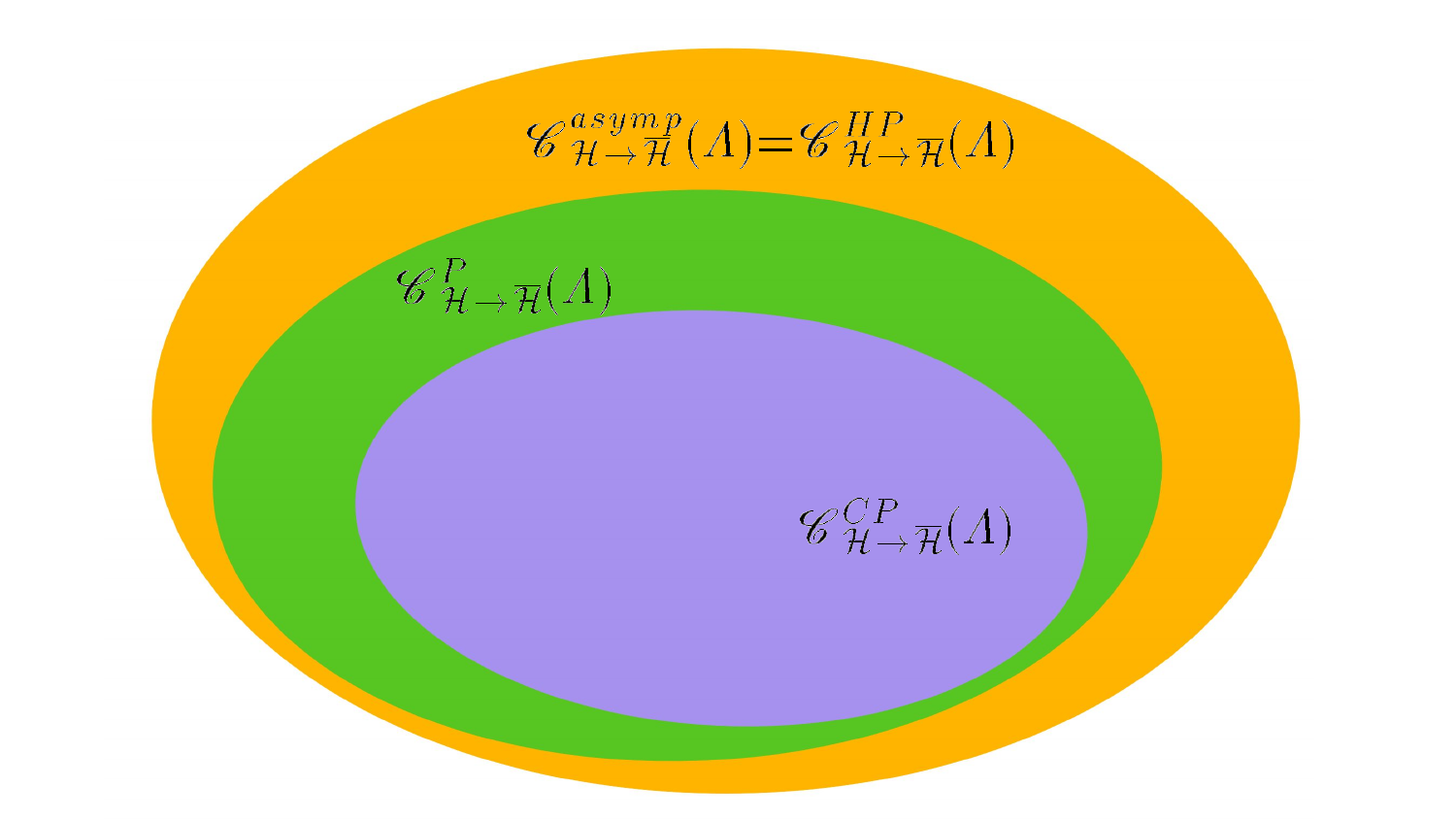}
     \caption{For a given channel $\Lambda\in\mathscr{C}(\cH,\cK)$, $\mathscr{C}^{HP}_{\cH\rightarrow\overline{\cH}}(\Lambda)$ represents set of all quantum channels obtained by concatenating it with Hermitian-preserving trace-preserving linear maps, $\mathscr{C}^{P}_{\cH\rightarrow\overline{\cH}}(\Lambda)$ represents set of all quantum channels obtained by concatenating it with positive maps and $\mathscr{C}^{CP}_{\cH\rightarrow\overline{\cH}}(\Lambda)$ represents set of all quantum channels obtained by post-processing it with a quantum channels. They follow the subset relation:$\mathscr{C}^{CP}_{\cH\rightarrow\overline{\cH}}(\Lambda)\subseteq\mathscr{C}^{P}_{\cH\rightarrow\overline{\cH}}(\Lambda)\subseteq\mathscr{C}^{asymp}_{\cH\rightarrow\overline{\cH}}(\Lambda)=\mathscr{C}^{HP}_{\cH\rightarrow\overline{\cH}}(\Lambda) $}\label{fig_subset_rel} 
\end{figure}

\subsection{Physical Implementability}\label{Sec:Main_res_Phys_Imp}
In this section, our goal is to study the physical implementability of $\Lambda_2$ given that $\Lambda_1$ has already been performed where $\Lambda_1\succeq_{asymp}\Lambda_2$.
But first, we discuss a quantifier named \emph{physical implementability} that was introduced in Ref. \cite{Jiang_2021_physical} to characterize the \emph{difficulty} of the physical approximation of a given HPTP linear map. The HPTP map is first decomposed into a linear combination of CPTP maps. The \emph{physical approximation} is then characterized by the expansion coefficients. Mathematically, given an HPTP map $\Theta\in\mathscr{C}_{HP}(\cK,\cK^\prime)$, the \emph{physical implementability} is defined as \cite{Jiang_2021_physical}
\begin{align}
    \nu(\Theta)=\log\min\Big\{\eta_1+\eta_2\Big\vert \Theta=&\eta_1\Phi_1-\eta_2\Phi_2,\nonumber\\
    &~\Phi_1,\Phi_2\in\mathscr{C}(\cK,\cK^\prime),~\eta_1,\eta_2\geq0\Big\}.\label{Eq:Phys_imp_def_nu}
\end{align}
The logarithm is in base 2 throughout the remainder of the paper. In Ref. \cite{Jiang_2021_physical} a number of desirable properties of $\nu(\Theta)$ are derived. Some of them which will be useful to us in this work are listed as follows
\begin{enumerate}[label=P \arabic*.]
    \item It can be computed efficiently using semidefinite programming and gives the lower bound on the cost of simulating a given HPTP linear map using physically implementable quantum operations.
    \item It is a faithful quantifier. Mathematically, $\nu(\Theta)=0$ \emph{if and only if} $\Theta\in\mathscr{C}(\cH,\cK)$.\label{Eq:Phys_imp_faith}
    \item If we have $\Theta_1\in\mathscr{C}_{HP}(\cH_1,\cK_1)$ and $\Theta_2\in\mathscr{C}_{HP}(\cH_2,\cK_2)$ then
    \begin{align}
        \nu(\Theta_1\otimes\Theta_2)=\nu(\Theta_1)+\nu(\Theta_2).\label{Eq:Phys_impl_add}
    \end{align}
    \item Given two HPTP maps $\Theta\in\mathscr{C}_{HP}(\cH,\cK)$ and $\Theta^{\prime}\in\mathscr{C}_{HP}(\cK,\cK^{\prime})$ we have
    \begin{align}
        \nu(\Theta\circ\Theta^{\prime})\leq\nu(\Theta)+\nu(\Theta^{\prime}).\label{Eq:Phys_imp_subadd}
    \end{align}
    \item Given a superchannel $\Delta$ and a HPTP map $\Theta\in\mathscr{C}_{HP}(\cH,\cK)$ we have
    \begin{align}
        \nu(\Delta[\Theta])\leq\nu(\Theta).
    \end{align}
\end{enumerate}

For two given quantum channels, $\Lambda_1\in\mathscr{C}(\cH_1,\cK_1)$ and $\Lambda_2\in\mathscr{C}(\cH_1,\cK_2)$ such that, $\Lambda_1\succeq_{asymp}\Lambda_2$ (see Remark \ref{Rm:Mult_HPTP_maps}) we can define the \emph{physical implementability} of $\Lambda_2$ given $\Lambda_1$ as
\begin{align}
    \mathfrak{R}_{\Lambda_1}(\Lambda_2):=\log\min_{\Theta}\Big\{2^{\nu(\Theta)}\Big\vert~\Lambda_2=\Theta\circ\Lambda_1,~\Theta\in\mathscr{C}_{HP}(\cK_1,\cK_2)\Big\}.\label{Eq:Def_Phys_imp}
\end{align}
Similar to $\nu(\Theta)$ in Ref. \cite{Jiang_2021_physical}, it can also be characterised by the following semidefinite programming in terms of the Choi matrices of $\Lambda_1, \Lambda_2$, and $\Theta$
\begin{align}
    2^{\mathfrak{R}_{\Lambda_1}(\Lambda_2)}=&\min x_1+x_2\nonumber\\
    &\text{s.t.}\nonumber\\
    &J_\Theta=J_1-J_2\nonumber\\
    &\tr_{\cK_2}J_1\leq x_1\mathbbm{1}_{\cK_1},~\tr_{\cK_2}J_2\leq x_2\mathbbm{1}_{\cK_1}\nonumber\\
    &J_1\geq0,~J_2\geq0\nonumber\\
    &J_{\Lambda_2}=\tr_{\cK_1}[(J_{\Theta}\otimes\mathbbm{1}_{\cH_1})(\mathbbm{1}_{\cK_2}\otimes J_{\Lambda_1}^{T_{\cK_1}})]\label{Eq:Phys_imp_SDP}
\end{align}
 Next, we will list and prove certain important properties $\mathfrak{R}_{\Lambda_1}(\Lambda_2)$

     \begin{proposition}
        \textbf{Faithfulness:} $\mathfrak{R}_{\Lambda_1}(\Lambda_2)=0$ if and only if $\Lambda_1\succeq_{postproc}\Lambda_2$
    \end{proposition}
    \begin{proof}
        Let us start with the \emph{if} part. Assume $\mathfrak{R}_{\Lambda_1}(\Lambda_2)=0$. Then from Eq. \eqref{Eq:Def_Phys_imp} $\exists$ $\Theta^*$ such that $2^{\nu(\Theta^*)}=1$. This implies that $\nu(\Theta^*)=0$. From the Property \hyperref[Eq:Phys_imp_faith]{P2} we can conclude that $\Theta^*$ is a CPTP map and $\Lambda_1\succeq_{postproc}\Lambda_2$.

        Now, for the \emph{only if} part, let us consider $\Lambda_1\succeq_{postproc}\Lambda_2$. So there exists a CPTP map $\cN$ such that $\Lambda_2=\cN\circ\Lambda_1$. From Eq. \eqref{Eq:Phys_imp_def_nu}, we get $\nu(\cN)=0$ which implies $\mathfrak{R}_{\Lambda_1}(\Lambda_2)=0$ by definition. It is clear that $\mathfrak{R}_{\Lambda_1}(\Lambda_2)\geq0$ when $\Lambda_1\succeq_{asymp}\Lambda_2$, as $\nu(\Theta)\geq0$ if $\Theta$ is an HPTP linear map.
    \end{proof}

 \begin{proposition}
     \textbf{Subadditivity:} Given quantum channels $\Lambda_1\in\mathscr{C}(\cH,\cK_1),~ \Lambda_1^{\prime}\in\mathscr{C}(\cH^{\prime},\cK_1^{\prime}),~\Lambda_2\in\mathscr{C}(\cH,\cK_2)$ , and $ \Lambda_2^{\prime}\in\mathscr{C}(\cH^{\prime},\cK_2^{\prime})$, if $\Lambda_1\succeq_{asymp}\Lambda_2$ and $\Lambda_1^{\prime}\succeq_{asymp}\Lambda_2^{\prime}$ holds, then
     \begin{align}
         \mathfrak{R}_{\Lambda_1\otimes\Lambda_1^{\prime}}(\Lambda_2\otimes\Lambda_2^{\prime})\leq\mathfrak{R}_{\Lambda_1}(\Lambda_2)+\mathfrak{R}_{\Lambda_1^{\prime}}(\Lambda_2^{\prime}).
     \end{align}
 \end{proposition}
 \begin{proof}
     As $\Lambda_1\succeq_{asymp}\Lambda_2$ and $\Lambda_1^{\prime}\succeq_{asymp}\Lambda_2^{\prime}$ we have
     \begin{align}
         \Lambda_2=&\Theta^*\circ\Lambda_1,\\
         \Lambda_2^{\prime}=&\Theta^{\prime*}\circ\Lambda_1^{\prime}.
     \end{align}
   Here $\Theta^*\in\mathscr{C}_{HP}(\cK_1,\cK_2)$ and $\Theta^{\prime*}\in\mathscr{C}_{HP}(\cK_1^{\prime},\cK_2^{\prime})$ are the optimal HPTP linear maps for the semidefinite programming for $\mathfrak{R}_{\Lambda_1}(\Lambda_2)$ and $\mathfrak{R}_{\Lambda_1^{\prime}}(\Lambda_2^{\prime})$ respectively. We can then write
   \begin{align}
       \Lambda_2\otimes\Lambda_2^{\prime}=(\Theta^*\otimes\Theta^{\prime*})\circ(\Lambda_1\otimes\Lambda_1^{\prime}).
   \end{align}
   It is clear that $\Theta^*\otimes\Theta^{\prime*}$ is a feasible solution of semidefinite programming for $\mathfrak{R}_{\Lambda_1\otimes\Lambda_1^{\prime}}(\Lambda_2\otimes\Lambda_2^{\prime})$. Let $\tilde{\Theta}$ be the optimal HPTP linear map for the semidefinite programming in Eq. \eqref{Eq:Phys_imp_SDP}.  Then using Property \hyperref[Eq:Phys_impl_add]{P3}, we have
   \begin{align}
       \mathfrak{R}_{\Lambda_1\otimes\Lambda_1^{\prime}}(\Lambda_2\otimes\Lambda_2^{\prime})=&\log2^{\nu(\tilde{\Theta})}\nonumber\\
       \leq&\log2^{\nu(\Theta^*\otimes\Theta^{\prime*})}\nonumber\\
       =&\nu(\Theta^*\otimes\Theta^{\prime*})\nonumber\\
       \leq&\nu(\Theta^*)+\nu(\Theta^{\prime*})\nonumber\\
       =&\mathfrak{R}_{\Lambda_1}(\Lambda_2)+\mathfrak{R}_{\Lambda_1^{\prime}}(\Lambda_2^{\prime}).
   \end{align}
 \end{proof}
 \begin{theorem}
        \textbf{Monotonicity:} Given two CPTP maps $\Lambda_1\in\mathscr{C}(\cH,\cK_1)$ and $\Lambda_2\in\mathscr{C}(\cH,\cK_2)$ with $\Lambda_1\succeq_{asymp}\Lambda_2$, if there exist $\Lambda_1^{\prime}\in\mathscr{C}(\cH,\cK_1^{\prime})$ and $\Lambda_2^{\prime}\in\mathscr{C}(\cH,\cK_2^{\prime})$ such that $\Lambda_1^{\prime}\succeq_{postproc}\Lambda_1$ and $\Lambda_2\succeq_{postproc}\Lambda_2^{\prime}$ holds then
        \begin{align}
            \mathfrak{R}_{\Lambda_1^{\prime}}(\Lambda_2^{\prime})\leq\mathfrak{R}_{\Lambda_1}(\Lambda_2).
        \end{align}
        The equality holds when $\Lambda_1\simeq_{postproc}\Lambda_1^{\prime}$ and $\Lambda_2\simeq_{postproc}\Lambda_2^{\prime}$.
    \end{theorem}
    \begin{proof}
        Given $\Lambda_1^{\prime}\succeq_{postproc}\Lambda_1$ and $\Lambda_2\succeq_{postproc}\Lambda_2^{\prime}$ we can write
        \begin{align}
            &\Lambda_1=\cN_1\circ\Lambda_1^{\prime},\\
            &\Lambda_2^{\prime}=\cN_2\circ\Lambda_2,\label{Eq:Lemma_post_proc}
        \end{align}
        where $\cN_1\in\mathscr{C}(\cK_1^{\prime},\cK_1)$ and $\cN_2\in\mathscr{C}(\cK_2,\cK_2^{\prime})$. As $\Lambda_1\succeq_{asymp}\Lambda_2$, consider $\Theta^*\in\mathscr{C}(\cK_1,\cK_2)$ to be the optimal HPTP linear map for the semidefinite programming for $\mathfrak{R}_{\Lambda_1}(\Lambda_2)$ in Eq. \eqref{Eq:Def_Phys_imp}. So we have
        \begin{align}
            \Lambda_2=\Theta^*\circ\Lambda_1.
        \end{align}
        Then starting from Eq. \eqref{Eq:Lemma_post_proc} we can write
        \begin{align}
            \Lambda_2^{\prime}=&\cN_2\circ\Lambda_2,\nonumber\\
            =&\cN_2\circ\Theta^*\circ\Lambda_1,\nonumber\\
            =&\cN_2\circ\Theta^*\circ\cN_1\circ\Lambda_1^{\prime},\nonumber\\
            =&\Theta^{\prime}\circ\Lambda_1^{\prime},
        \end{align}
        where $\Theta^{\prime}:=(\cN_2\circ\Theta^*\circ\cN_1)\in\mathscr{C}_{HP}(\cK_1^{\prime},\cK_2^{\prime})$. Using the Properties \hyperref[Eq:Phys_imp_faith]{P2} and \hyperref[Eq:Phys_imp_subadd]{P4} of $\nu$ we get
        \begin{align}
            \nu(\Theta^{\prime})=\nu(\cN_2\circ\Theta^*\circ\cN_1)\leq&\nu(\cN_2)+\nu(\Theta^*)+\nu(\cN_1),\nonumber\\
            \leq&\nu(\Theta^*)=\mathfrak{R}_{\Lambda_1}(\Lambda_2).\label{Eq:Lemma3_Phys_imp_ineq}
        \end{align}
        
        Here, the last equality follows from the fact that $\Theta^*$ is the optimal HPTP linear map for the semidefinite programming for $\mathfrak{R}_{\Lambda_1}(\Lambda_2)$. As $\Theta^{\prime}$ is a feasible solution of the semidefinite programming for $\mathfrak{R}_{\Lambda_1^{\prime}}(\Lambda_2^{\prime})$, using Eq. \eqref{Eq:Lemma3_Phys_imp_ineq}, we have
        \begin{align}
            \mathfrak{R}_{\Lambda_1^{\prime}}(\Lambda_2^{\prime})\leq&\nu(\Theta^{\prime}),\nonumber\\
            \leq&\mathfrak{R}_{\Lambda_1}(\Lambda_2).\label{Eq:Phys_Imp_mono_order}
        \end{align}
        It is clear that if $\Lambda_1\simeq_{postproc}\Lambda_1^{\prime}$ and $\Lambda_2\simeq_{postproc}\Lambda_2^{\prime}$ then following the same procedure, we will also get
        \begin{align}
            \mathfrak{R}_{\Lambda_1}(\Lambda_2)\leq \mathfrak{R}_{\Lambda_1^{\prime}}(\Lambda_2^{\prime}).\label{Eq:Phys_Imp_mono_order_rev}
        \end{align}
        Thus, from Eqs. \eqref{Eq:Phys_Imp_mono_order} and \eqref{Eq:Phys_Imp_mono_order_rev}, we see that the equality holds.
\end{proof}
When $\Lambda_1\succeq_{asymp}\Lambda_2$, the quantity $\mathfrak{R}_{\Lambda_1}(\Lambda_2)$, thus, clearly characterizes how far the physical implementability of the quantum channel $\Lambda_2$ given that the quantum channel $\Lambda_1$ has been already implemented deviates from the case when $\Lambda_2$ is just a post processing of $\Lambda_1$.

        In the special case, when $\Lambda_1$ is the post processing of itself \textit{i.e.} $\Lambda_1=\Lambda_1^{\prime}$ the inequality in Eq. \eqref{Eq:Phys_Imp_mono_order} reduces to
        \begin{align}
            \mathfrak{R}_{\Lambda_1}(\Lambda_2^{\prime})
            \leq\mathfrak{R}_{\Lambda_1}(\Lambda_2). 
        \end{align}
        This inequality tells us that whenever $\Lambda_2$ becomes noisier, the physical implementation of $\Lambda_2$, given the quantum channel $\Lambda_1$ has already been implemented, is easier.

        Similarly, when $\Lambda_2$ is the post-processing of itself, the inequality becomes
        \begin{align}
            \mathfrak{R}_{\Lambda_1^{\prime}}(\Lambda_2)
            \leq\mathfrak{R}_{\Lambda_1}(\Lambda_2). 
        \end{align}
        From this inequality, we conclude that if $\Lambda_1$ is the post-processing of $\Lambda_1^{\prime}$, the physical implementability of the given channel $\Lambda_2$ is easier in the scenario where $\Lambda_1^{\prime}$ has already been implemented rather than in the case where $\Lambda_1$ has been already implemented.

\section{Conclusion}\label{Sec:Conc}
In this work, we have tried to provide a physically meaningful way to compare two quantum channels using Hermitian-preserving trace-preserving linear maps. Whenever given an arbitrary unknown input state, it is possible to determine the output state of one quantum channel $\Lambda_2$ from the output statistics of another quantum channel $\Lambda_1$ using some quantum measurement, we say that the quantum channel $\Lambda_1$ is at least as powerful as the quantum channel $\Lambda_2$ in the asymptotic sense. We have shown that it is equivalent to saying that $\Lambda_2$ can be written as a concatenation of $\Lambda_1$ with a Hermitian-preserving trace-preserving linear map, and this map also does not need to be positive. We further show that it is possible that $\Lambda_2$ can't be post-processed from $\Lambda_1$ even though the output statistics of the former can be obtained from the output statistics of the latter. Thus, there exists a hierarchy between the preorders induced by these two relations. In between these two, there exists the prorder introduced by the relation between $\Lambda_1$ and $\Lambda_2$, where $\Lambda_2$ can be obtained by concatenating $\Lambda_1$ with a positive linear map. We then characterize the physical implementiblity of $\Lambda_2$ , given $\Lambda_1$ has already been implemented, whenever $\Lambda_2$ can be written as the concatenation of $\Lambda_1$ with a Hermitian-preserving trace-preserving map. We further show the implications of our results on the incompatibility of quantum devices.

Our work opens up several avenues for future research. 
We already know positive maps are important for constructing witnesses for entanglement detection. It would be interesting and worthwhile to see whether a similar operationally meaningful preorder can also be constructed for positive linear maps in the context of the concatenation of quantum channels. Another direction to venture is how some information-theoretic and thermodynamic properties of two channels are related to each other, given that one of them is at least as powerful as the other in an asymptotic sense. It will also be interesting to explore how our study is helpful in quantifying the performance of some information-theoretic or thermodynamic tasks.

\section{acknowledgments}
AM acknowledges STARS (Grant No.
STARS/STARS-2/2023-0809), Government of India for support. The authors thank Prof Guruprasad Kar, Prof. Siddhartha Das, Dr. Martin Pl{\'a}vala, Prof. Teiko Heinosaari, and Prof. Francesco Buscemi for their useful comments.

\bibliography{reference}

\end{document}